\documentclass[aps,notitlepage,pra, nofootinbib,superscriptaddress]{revtex4-1}

\usepackage{mathpazo}

\usepackage{amsmath}
\usepackage{amsfonts,amssymb,amsthm, bbm,braket}
\usepackage{graphicx}   
\usepackage{subfigure}  
\usepackage{amsbsy} 
\usepackage[bold]{hhtensor} 
\usepackage{natbib}
\usepackage{algpseudocode}

\usepackage{multirow}



\newcommand\Tr{\mathrm{Tr}}


\newtheorem{corollary}{Corollary}
\newtheorem{theorem}{Theorem}
\newtheorem{lemma}{Lemma}
\newtheorem{proposition}{Proposition}



\usepackage[breaklinks=true]{hyperref}

\hypersetup{
  colorlinks   = true, 
  urlcolor     = blue, 
  linkcolor    = blue, 
  citecolor   = red 
}

\begin{document}


\title { Near-optimal quantum tomography: estimators and bounds}

\author{Richard Kueng}
\affiliation{Centre for Engineered Quantum Systems, School of Physics, The University of Sydney, Sydney, NSW, Australia}
\affiliation{Institute for Theoretical Physics, University of Cologne, Germany}
\affiliation{Institute for Physics, University of Freiburg, Germany}

\author{Christopher Ferrie}
\affiliation{
Center for Quantum Information and Control,
University of New Mexico,
Albuquerque, New Mexico, 87131-0001}
\affiliation{Centre for Engineered Quantum Systems, School of Physics, The University of Sydney, Sydney, NSW, Australia}

\date{\today}


\begin{abstract}
We give bounds on the average fidelity achievable by any quantum state estimator, which is arguably the most prominently used figure of merit in quantum state tomography.  Moreover, these bounds can be computed online---that is, while the experiment is running.  We show numerically that these bounds are quite tight for relevant distributions of density matrices.  We also show that the Bayesian mean estimator is ideal in the sense of performing close to the bound without requiring optimization.  Our results hold for all finite dimensional quantum systems.
\end{abstract}


\maketitle

\tableofcontents
\section{Introduction}

Inferring a quantum mechanical description of a physical system is equivalent to assigning it a quantum state---a process referred to as \emph{tomography}.  Tomography is now a routine task for designing, testing and tuning qubits in the quest of building quantum information processing devices \cite{Nielsen2010Quantum}.  In determining how ``good'' one is performing this task, a figure of merit must be reported.  By far the most commonly used figure of merit for quantum states is \emph{fidelity} \cite{Wootters1981Statistical,Jozsa1994Fidelity}.  Nowadays, fidelity is used to compare quantum states and processes in a wide variety of tasks, from quantum chaos to quantum control to the continuous monitoring of quantum systems \cite{Emerson2002Fidelity, Khaneja2005Optimal, Bagan2006Optimal, Emerson2007Symmetrized, Flammia2011Direct, DaSilva2011Practical, Cook2014Single}.  One might find it surprising, then, that the technique which optimizes performance with respect to fidelity is not known.

For $d$-dimensional state space,
\begin{equation}
\mathcal S :=  \left\{ \sigma \in L \left( \mathbb{C}^d \right):  \; \sigma \geq 0, \; \Tr (\sigma) = 1 \right\},
\end{equation}
the \emph{fidelity} between two states $\rho, \sigma \in \mathcal S$ is defined to be \cite{Wootters1981Statistical,Jozsa1994Fidelity},
\begin{equation}
F(\rho,\sigma) := \| \sqrt{ \rho} \sqrt{ \sigma} \|^2_1 = \left[\Tr  \sqrt{ \sqrt{\rho} \sigma \sqrt{\rho}} \right]^2.
\label{eq:fidelity}
\end{equation}
Define the \emph{average fidelity} with respect to some measure d$\rho$ as $\mathbb E_{\rho}[F(\rho,\sigma)]$\footnote{Expectation values will always be denote with a subscript which specifies the implicit distribution of variables being averaged over.}.
We want the average of this to be as large as possible.  Thus, the problem can be succinctly stated as follows:
\begin{equation}\label{the full problem}
\begin{aligned}
& {\text{maximize}}
& & \mathbb E_{\rho}[F(\rho,\sigma)] \\
& \text{subject to}
& & \Tr(\sigma) = 1, \\
& & & \sigma \geq 0.
\end{aligned}
\end{equation}

In the context of tomography, we think of $\rho$ as the ``true state'' and $\sigma$ as the estimated state.  An \emph{estimator} is a function from the space of data to quantum states $\sigma:\texttt{data} \mapsto \sigma(\texttt{data})\in\mathcal S$, where $\texttt{data}$ are the results of a sequence of quantum measurements.  Since both the true state and data are unknown, we take the expected value with respect to the joint distribution of $(\rho,\texttt{data})$ to obtain the average fidelity:
\begin{equation}\label{eq:jointaverage}
f(\sigma) = \mathbb E_{\rho,\texttt{data}}[F(\rho,\sigma(\texttt{data}))].
\end{equation}
We want this to be as large as possible.  The estimator which maximizes this quantity is equivalent to the estimator maximizing the following posterior average fidelity for every data set:
\begin{equation}
f(\sigma|\texttt{data}) = \mathbb E_{\rho|\texttt{data}}[F(\rho,\sigma(\texttt{data}))].
\end{equation}
An estimator which maximizes this is called a \emph{Bayes estimator}\footnote{The terminology and objective functions used here can be seen as standard generalizations of those familiar in decision theory.  See, e.g., \cite{Berger1985Statistical}.}.  Bayes estimators are useful both to understand Bayesian optimality and to provide upper bounds for the worst case performance.

Now here is the subtle and important point: the measurements performed, the data themselves and the distribution from which they were generated are not important once the posterior distribution has been calculated.  If we know the solution for every measure d$\rho$, then we know the solution for the posterior measure d$\rho|\texttt{data}$.  For brevity, then, we will drop this conditional information from now on and the problem reduces again to \eqref{the full problem}.

\section{Summary of Results} \label{sec:results}

In this work, we provide absolute benchmarks for the average fidelity performance of any tomographic estimation strategy by way of upper and lower bounds.    This is important because, in the field of quantum tomography, a common theme is to compare estimators.  Up to date many options are available:  linear inversion \cite{Nielsen2010Quantum}, maximum likelihood \cite{Hradil1997Quantum}, Bayesian mean \cite{BlumeKohout2010Optimal}, hedged maximum likelihood \cite{BlumeKohout2010Hedged}, and compressed sensing \cite{Gross2010Quantum, kueng_low_2014}---to name a few. Often estimators are compared by simulating measurements on ensembles of states drawn according to some measure and averaging the fidelity.  This can only provide conclusions about the relative performance of estimators.  Thus, our bounds can be used to benchmark the fidelity performance of other candidate estimators.  

We complement our theoretical findings with numerical experiments. These demonstrate the relative tightness of our bounds and, in particular, reveal that the Bayesian mean estimator is an excellent choice---owing to its near-optimal performance and ease of implementation.  Importantly, both the mean of the distribution and our bounds can be computed \emph{online}---that is, the estimator and its performance can be computed while data is being taken.   In the context of Bayesian quantum information theory \cite{BlumeKohout2010Optimal}, our findings lend credence to the standard approach of using the mean of the posterior distribution as an estimator is a near-optimal one.

We note that this problem has been solved for the case of a single qubit ($d=2$).  Bagan \emph{et al} \cite{Bagan2006Optimal} have given the optimal estimator (and measurement!) for any isotropic prior measure.  Unfortunately, by making heavy use of the Bloch representation of a qubit, the methods do not generalize. Whereas, our bound holds for all distributions of states in any dimension and coincides with the results of \cite{Bagan2006Optimal} for the case of a single qubit.

\subsection{Ensembles of pure states}

We first present the analytically soluble case of measures supported only on pure states.  Such a case is common in theoretical studies which average the performance of their protocols over the popular choice of the 
unique Haar invariant measure on pure states.  The solution is organized into the following theorem:
\begin{theorem}\label{thm:pure}
Choose an arbitrary dimension $d$ and assume that the integration measure $\mathrm{d} \rho$ is supported only on pure states. 
Then, 
the state which solves the optimization problem \eqref{the full problem} is the eigenvector of $\mathbb E_\rho[\rho]$ with maximal eigenvalue. It achieves 
a maximal fidelity of $ \left\| \mathbb E_\rho \left[ \rho \right] \right\|_\infty$.
\end{theorem}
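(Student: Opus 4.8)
The plan is to exploit the fact that fidelity becomes a \emph{linear} function of the true state whenever that state is pure, which collapses the optimization onto a simple eigenvalue problem. First I would simplify $F(\rho,\sigma)$ under the assumption $\rho = |\psi\rangle\langle\psi|$. Since $\sqrt{\rho} = \rho$ for a pure state, the matrix appearing inside the square root in \eqref{eq:fidelity} is $\sqrt{\rho}\,\sigma\,\sqrt{\rho} = |\psi\rangle\langle\psi|\,\sigma\,|\psi\rangle\langle\psi| = \langle\psi|\sigma|\psi\rangle\,|\psi\rangle\langle\psi|$, a nonnegative scalar multiple of the rank-one projector $|\psi\rangle\langle\psi|$. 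Taking the operator square root and then the trace yields $F(|\psi\rangle\langle\psi|,\sigma) = \langle\psi|\sigma|\psi\rangle = \Tr(\rho\sigma)$. The only point requiring care is the handling of the square roots, but because every matrix in sight is proportional to a single projector, no operator-ordering subtleties arise and the identity is immediate.

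With this simplification in hand, linearity of the trace and of the expectation gives $\mathbb E_\rho[F(\rho,\sigma)] = \mathbb E_\rho[\Tr(\rho\sigma)] = \Tr(\mathbb E_\rho[\rho]\,\sigma)$. Writing $\bar\rho := \mathbb E_\rho[\rho]$, which is itself a valid density matrix, the optimization \eqref{the full problem} reduces to maximizing the linear functional $\Tr(\bar\rho\,\sigma)$ over all $\sigma \geq 0$ with $\Tr(\sigma) = 1$.

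The final step is the standard variational characterization of the largest eigenvalue. I would diagonalize $\bar\rho = \sum_i \lambda_i |v_i\rangle\langle v_i|$ and bound $\Tr(\bar\rho\,\sigma) = \sum_i \lambda_i \langle v_i|\sigma|v_i\rangle \leq \lambda_{\max} \sum_i \langle v_i|\sigma|v_i\rangle = \lambda_{\max}\Tr(\sigma) = \lambda_{\max} = \|\bar\rho\|_\infty$, using $\lambda_i \leq \lambda_{\max}$ together with $\langle v_i|\sigma|v_i\rangle \geq 0$. Equality is attained by choosing $\sigma = |v_{\max}\rangle\langle v_{\max}|$, the projector onto an eigenvector of $\bar\rho$ with maximal eigenvalue, which simultaneously certifies the claimed optimal estimator and the claimed optimal value.

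The argument presents essentially no obstacle; the only thing demanding attention is the pure-state fidelity identity, and there one merely observes that for pure $\rho$ all matrices inside the square root are scalar multiples of the same rank-one projector. Everything downstream is elementary linear algebra, which is precisely why this case is \emph{analytically soluble} in contrast to the mixed-state setting treated later.
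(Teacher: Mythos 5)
Your proposal is correct and follows the same route as the paper: reduce the fidelity to $\Tr(\rho\sigma)$ using purity of $\rho$, pull the expectation inside the trace by linearity, and solve the resulting linear program, whose optimum is the top eigenvalue $\left\| \mathbb E_\rho[\rho] \right\|_\infty$ attained at the corresponding eigenprojector. You merely spell out the textbook eigenvalue argument that the paper cites rather than writes out, so there is nothing to add.
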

The proof is a simple exercise in linear programming.  When $\rho$ is a pure state, the fidelity simplifies to $F(\rho,\sigma) = \Tr(\rho\sigma)$.  Linearity allows us to bring the expectation inside the trace so that the problem becomes
\begin{equation}\label{teh prob}
\begin{aligned}
& {\text{maximize}}
& & \Tr(\mathbb E_\rho[\rho] \sigma) \\
& \text{subject to}
& & \Tr(\sigma) = 1, \\
&&& \sigma \geq 0.
\end{aligned}
\end{equation}
The solution can be found in many textbooks covering linear programming---e.g. \cite{Boyd2004Convex}.  This solution also coincides with the one noted for a distribution supported on two states in \cite{BlumeKohout2006Accurate}.

\subsection{General measures on mixed states}

For measures with support on mixed states, the situation is markedly different.  Our main technical contribution are new upper bounds for this case.
We obtain them by replacing the fidelity function---which is notoriously difficult to grasp---in the main optimization problem \eqref{the full problem} 
by quantities that are easier to handle in full generality.  One rather straightforward approach to do so is to relate the fidelity function $f (\rho, \sigma)$ between arbitrary states $\rho,\sigma \in \mathcal{S}$ to corresponding Schatten-$p$-norm distances 
\begin{equation*}
\| \rho - \sigma \|_p = \left( \mathrm{Tr} \left( \left| \rho - \sigma \right|^p \right) \right)^{1/p},
\end{equation*}
with $ 1 \leq p \leq \infty$ and $| X| = \sqrt{ X^* X}$ for any $X \in L \left( \mathbb{C}^d \right)$. 
 This can be done by employing the well-known and often used Fuchs-van de Graaf inequalities \cite{Fuchs1999Cryptograhic}
\begin{align*}
1 - \sqrt{F \left( \rho, \sigma \right)} \leq \frac{1}{2} \| \rho - \sigma \|_1 
\leq \sqrt{1 - F (\rho, \sigma)} \quad \forall \rho, \sigma \in \mathcal{S}. 
\end{align*}
This inequality together with the hierarchy of Schatten-$p$-norms assures
\begin{equation}
F \left( \rho, \sigma \right) \leq 1 - \frac{1}{4} \left|\ \rho - \sigma \right\|_1^2 \leq 1 - \frac{1}{4} \left\| \rho - \sigma \right\|_2^2, \label{eq:fuchs_two_norm}
\end{equation}
for any two quantum states $\rho, \sigma \in \mathcal{S}$. 
Replacing the objective function in the central optimization problem \eqref{the full problem}
by such an upper bound results in a different optimization which admits a general analytic solution.
Clearly, such a relaxed optimum bounds the original figure of merit from above and allows us to establish our second main result.

\begin{theorem} \label{thm:boundfuchs}
For any finite dimension $d$ and any distribution $\mathrm{d} \rho$, the maximal average fidelity achieved by any estimator $\sigma \in \mathcal{S}$ obeys
\begin{equation}
\max_{\sigma \in \mathcal{S}}
\mathbb{E}_\rho
\left[
F ( \rho, \sigma )
\right]
\leq
1 - \frac{1}{4} 
\mathrm{Tr} \left( \mathbb{E}_\rho \left[ \rho^2 \right] - \mathbb{E}_\rho \left[ \rho \right]^2 \right). \label{eq:bound1}
\end{equation}
\end{theorem}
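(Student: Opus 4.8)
The plan is to substitute the relaxed pointwise bound \eqref{eq:fuchs_two_norm} into the objective of \eqref{the full problem}, thereby trading the intractable fidelity for a quadratic Hilbert--Schmidt expression whose optimum is the mean of the distribution. First I would apply $F(\rho,\sigma) \leq 1 - \tfrac{1}{4}\|\rho-\sigma\|_2^2$ for each fixed $\sigma$ and take the expectation $\mathbb{E}_\rho$, whose linearity yields
\[
\mathbb{E}_\rho\!\left[ F(\rho,\sigma) \right] \leq 1 - \frac{1}{4}\, \mathbb{E}_\rho\!\left[ \| \rho - \sigma \|_2^2 \right].
\]
Since this holds for every $\sigma \in \mathcal{S}$, maximizing the left-hand side is dominated by minimizing the right-hand side, so that
\[
\max_{\sigma \in \mathcal{S}} \mathbb{E}_\rho\!\left[ F(\rho,\sigma) \right] \leq 1 - \frac{1}{4} \min_{\sigma \in \mathcal{S}} \mathbb{E}_\rho\!\left[ \| \rho - \sigma \|_2^2 \right].
\]

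It then remains to evaluate the inner minimization in closed form. Writing $\bar\rho := \mathbb{E}_\rho[\rho]$ and expanding $\|\rho-\sigma\|_2^2 = \Tr(\rho^2) - 2\Tr(\rho\sigma) + \Tr(\sigma^2)$, linearity of the expectation gives $\mathbb{E}_\rho[\|\rho-\sigma\|_2^2] = \Tr(\mathbb{E}_\rho[\rho^2]) - 2\Tr(\bar\rho\,\sigma) + \Tr(\sigma^2)$. Completing the square in the Hilbert--Schmidt inner product recasts this as
\[
\mathbb{E}_\rho\!\left[ \| \rho - \sigma \|_2^2 \right] = \| \sigma - \bar\rho \|_2^2 + \Tr\!\left( \mathbb{E}_\rho[\rho^2] \right) - \Tr\!\left( \bar\rho^2 \right),
\]
where only the nonnegative term $\|\sigma-\bar\rho\|_2^2$ depends on $\sigma$. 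The unconstrained minimizer is therefore $\sigma = \bar\rho$.

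The single point that genuinely needs checking is whether this minimizer respects the constraints of \eqref{the full problem}, i.e.\ whether $\bar\rho \in \mathcal{S}$. Here I would observe that $\bar\rho = \mathbb{E}_\rho[\rho]$ is a convex average (an integral) of density matrices, so by convexity of $\mathcal{S}$ it satisfies $\bar\rho \geq 0$ and $\Tr(\bar\rho) = 1$; the feasibility constraint is thus inactive and the constrained and unconstrained minima coincide. Substituting $\sigma = \bar\rho$ leaves
\[
\min_{\sigma \in \mathcal{S}} \mathbb{E}_\rho\!\left[ \| \rho - \sigma \|_2^2 \right] = \Tr\!\left( \mathbb{E}_\rho[\rho^2] - \mathbb{E}_\rho[\rho]^2 \right),
\]
which inserted into the displayed upper bound is precisely \eqref{eq:bound1}. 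I do not expect a real analytic obstacle: the whole argument is a completion of the square, and the only subtlety is verifying the harmless feasibility of the posterior mean $\bar\rho$, which is exactly what makes the relaxation solvable in closed form.
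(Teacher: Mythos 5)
Your proposal is correct and follows essentially the same route as the paper: apply the pointwise bound \eqref{eq:fuchs_two_norm}, take expectations, and minimize $\mathbb{E}_\rho[\|\rho-\sigma\|_2^2]$ over $\sigma$, finding the minimizer $\mathbb{E}_\rho[\rho]$ (the paper does this via the vanishing derivative plus convexity rather than completing the square, which is an equivalent calculation). Your explicit check that the unconstrained minimizer lies in $\mathcal{S}$ is a nice touch that the paper leaves implicit.
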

Note that the expression on the right hand side of \eqref{eq:bound1} can be interpreted as a non-commutative generalization of the variance of a
probability distribution.
Having already outlined the main ideas necessary to establish such a result, we refer to \autoref{sub:main_proof} for a complete proof.

Another way of establishing upper bounds on the average fidelity involves the concept of \emph{super-fidelity},
which provides the following upper bound on the fidelity \cite{Miszczak2009Sub}:
\begin{align}
F (\rho, \sigma) \leq \Tr \left( \rho \sigma \right) + \sqrt{1 - \Tr \left( \rho^2 \right)} \sqrt{1 - \Tr \left( \sigma^2 \right)}. 
\label{eq:super_fidelity}
\end{align}
Although more involved, we shall see that such an approach yields strictly better bounds than the ones presented in \autoref{thm:boundfuchs}.
For brevity, we define $\hat \rho:=\mathbb E_\rho[\rho]$ and $ p_\rho := \mathbb E_\rho\left[\sqrt{1-\Tr(\rho^2)}\right]$,
such that inequality \eqref{eq:super_fidelity} assures
\begin{equation}
\max_{\sigma \in \mathcal{S}} \mathbb{E}_\rho \left[ F \left( \rho, \sigma \right) \right]
\leq \max_{\sigma \in \mathcal{S}} 
\left( \mathrm{Tr} \left( \hat{\rho} \sigma \right) + p_{\rho} \sqrt{1- \mathrm{Tr} \left( \sigma^2 \right)} \right), \label{eq:super_fidelity_maximization}
\end{equation}
for any distribution $\mathrm{d} \rho$. 
Although more tractable than the original problem, the optimization on the right hand side still requires
solving a non-commutative maximization over all quantum states $\sigma \in \mathcal{S}$.
However, applying a corollary of the famous Birkhoff-von Neumann theorem---see e.g. \cite[Theorem 8.7.6]{Horn1990Matrix}---allows for restricting this optimization to density operators $\sigma$ that commute with the distribution's mean $\hat{\rho}$---see \autoref{lem:super_aux}  below.
If $\hat{r}_1,\ldots,\hat{r}_d$ denote the eigenvalues of $\hat{\rho}$ such a restriction assures that solving the right hand side of \eqref{eq:super_fidelity_maximization}
is equivalent to
\begin{align}
\textrm{maximize} & \quad \sum_{i=1}^d \hat{r}_i s_i + p_\rho \sqrt{1 - \sum_{i=1}^d s_i^2}
 \nonumber \\
\textrm{subject to} & \quad \sum_{i=1}^d s_i = 1, \label{eq:optimization1} \\
& \quad s_i \geq 0 \quad 1 \leq i \leq d, \nonumber
\end{align}
which is a commutative convex optimization problem. We refer to \autoref{lem:super_aux} below for a detailed proof of this assertion.  Note that, if the measure $\mathrm{d} \rho$ is supported exclusively on pure states, $p_\rho$ vanishes and \eqref{eq:optimization1} reduces to \autoref{thm:pure} which is tight.

In order to obtain analytical bounds for mixed states, we further relax \eqref{eq:optimization1} by replacing the non-negativity constraints ($s_i \geq 0$) by the weaker demand that the optimization vector $\left( s_1,\ldots,s_d \right)^T \in \mathbb R ^d$ is contained in the Euclidean unit ball---i.e. $\sum_{i=1}^d s_i^2 \leq 1$. 
As we shall show in \autoref{sec:geometry}, such a simplification is the tightest possible ellipsoidal relaxation of \eqref{eq:optimization1} and allows us to 
apply the method of Lagrangian multipliers in a straightforward fashion. 
Doing so results in the main theoretical statement of this paper.

\begin{theorem} \label{thm:bound}
For any finite dimension $d$ and any distribution $\mathrm{d} \rho$ over states, the fidelity achieved by any estimator $\sigma \in \mathcal{S}$ is bounded from above by
\begin{align}
\mathbb E_\rho[F(\rho,\sigma)] 
\leq
 \frac{1}{d} \left( 1 + \sqrt{d-1} \sqrt{d \left( \mathbb E_\rho \left[\sqrt{1- \Tr \left( \rho^2 \right)} \right]^2
+ \Tr \left( \mathbb E_\rho \left[ \rho \right]^2 \right) \right) -1} \right).
\label{eq:main_bound}
\end{align}
The matrix achieving this optimum corresponds to
\begin{equation}\label{eq:optimizer}
\sigma^\sharp
= \frac{1}{d} \mathbbm{1} 
+ \sqrt{\frac{d-1}{d \left( p_\rho^2 + \Tr \left( \hat{\rho}^2\right)\right) - 1}}
\left( \hat{\rho} - \frac{1}{d} \mathbbm{1} \right),
\end{equation}
where $\mathbbm{1} \in L \left( \mathbb{C}^d \right)$ denotes the identity matrix.
\end{theorem}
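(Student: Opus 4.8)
The plan is to solve the relaxed optimization obtained by replacing the non-negativity constraints $s_i \geq 0$ in \eqref{eq:optimization1} with the single Euclidean-ball constraint $\sum_{i=1}^d s_i^2 \leq 1$, and to show that its optimal value and optimizer are exactly \eqref{eq:main_bound} and \eqref{eq:optimizer}. First I would record why this relaxation is legitimate: every probability vector obeys $\sum_i s_i^2 \leq (\sum_i s_i)^2 = 1$, so the probability simplex sits inside the intersection of the hyperplane $\sum_i s_i = 1$ with the unit ball, and enlarging the feasible set can only increase the maximum. Hence the relaxed optimum upper-bounds \eqref{eq:super_fidelity_maximization}, which by the super-fidelity inequality \eqref{eq:super_fidelity} already bounds $\max_\sigma \mathbb{E}_\rho[F(\rho,\sigma)]$. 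I would also note that $s \mapsto \sum_i \hat{r}_i s_i + p_\rho\sqrt{1 - \sum_i s_i^2}$ is concave and the relaxed feasible set is convex, so any stationary point of the Lagrangian is the global maximizer.

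The central step is the observation that, for $p_\rho > 0$, the ball constraint is inactive. As $\sum_i s_i^2 \to 1$ the gradient of the $\sqrt{1 - \sum_i s_i^2}$ term diverges, pushing the maximizer strictly inside the ball, so I may drop the inequality and attach a single multiplier $\lambda$ to the equality $\sum_i s_i = 1$. Setting $\nabla_s \mathcal{L} = 0$ yields the stationarity condition $\hat{r}_i - \lambda = p_\rho\, s_i / \sqrt{1 - \sum_j s_j^2}$, which shows the optimal $s_i$ are affine in the eigenvalues $\hat{r}_i$, namely $s_i = \alpha(\hat{r}_i - \lambda)$ with $\alpha := \sqrt{1 - \sum_j s_j^2}/p_\rho > 0$. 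Imposing $\sum_i s_i = 1$ fixes $\alpha = (1 - d\lambda)^{-1}$, and substituting the definition of $\alpha$ back into $\|s\|^2 = 1 - \alpha^2 p_\rho^2$ produces a quadratic in $\lambda$. Its discriminant factors cleanly as $(d-1)\left[d(p_\rho^2 + \Tr(\hat{\rho}^2)) - 1\right]$, which is non-negative because $\Tr(\hat{\rho}^2) \geq 1/d$; selecting the root with $1 - d\lambda > 0$ (forced by $\alpha > 0$) gives $\lambda = \tfrac{1}{d}(1 - \beta)$ with $\beta := \sqrt{[d(p_\rho^2 + \Tr(\hat{\rho}^2)) - 1]/(d-1)}$.

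With $\lambda$ and $\alpha = 1/\beta$ in hand, the rest is bookkeeping: substituting into $s_i = \tfrac{1}{d} + \tfrac{1}{\beta}(\hat{r}_i - \tfrac{1}{d})$ and reassembling the eigenvalues into the matrix sharing the eigenbasis of $\hat{\rho}$ reproduces \eqref{eq:optimizer}, while evaluating the objective as $\alpha(\Tr(\hat{\rho}^2) + p_\rho^2 - \lambda) = \tfrac{1}{d}(1 + (d-1)\beta)$ and expanding $(d-1)\beta = \sqrt{d-1}\sqrt{d(p_\rho^2 + \Tr(\hat{\rho}^2)) - 1}$ reproduces \eqref{eq:main_bound}. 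Along the way I would verify that $\|s\|^2 = 1 - p_\rho^2/\beta^2$ lies in $[0,1]$, which again follows from $\Tr(\hat{\rho}^2) \geq 1/d$ and confirms feasibility for the relaxation. The degenerate case $p_\rho = 0$ is handled separately: the objective becomes linear, and maximizing over the ball–hyperplane intersection reproduces the stated value by a direct Cauchy–Schwarz computation, consistent with the unrelaxed problem collapsing to \autoref{thm:pure}.

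I expect the main obstacle to be purely algebraic rather than conceptual: correctly simplifying the discriminant and choosing the sign of its root so that $\alpha > 0$, since the wrong root silently yields a minimizer, and then carrying the substitutions through without sign slips to land on the advertised closed form. The only genuinely conceptual checkpoints are justifying the relaxation as a valid upper bound and arguing that the ball constraint is inactive when $p_\rho > 0$.
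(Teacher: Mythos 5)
Your Lagrange-multiplier computation is correct and is essentially the paper's own argument (\autoref{lem:super_aux2}): the same stationarity condition, the same quadratic in $\lambda$, the same root selection (you force $\alpha>0$; the paper equivalently discards $\lambda_-$ because it would make $\sqrt{1-\langle\vec s,\vec s\rangle}=p_\rho/(1+d\lambda)$ negative), and the same verification that the ball constraint is inactive at the critical point. Your explicit Cauchy--Schwarz treatment of the degenerate case $p_\rho=0$ is, if anything, slightly more self-contained than the paper's deferral to \autoref{thm:pure}.

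There is one genuine gap, at the very first reduction. Your justification that ``the relaxed optimum upper-bounds \eqref{eq:super_fidelity_maximization}'' only establishes that the probability simplex is contained in the ball--hyperplane intersection, i.e.\ that the ball relaxation dominates the \emph{eigenvalue} problem \eqref{eq:optimization1}. But \eqref{eq:super_fidelity_maximization} is a maximization over all density operators $\sigma\in\mathcal{S}$, not over vectors of eigenvalues in a fixed basis, and nothing in your argument explains why an arbitrary $\sigma$---in particular one that does not commute with $\hat\rho$---cannot beat every operator diagonal in $\hat\rho$'s eigenbasis. Closing this requires the content of the paper's \autoref{lem:super_aux}: by the trace inequality $\Tr(\hat\rho\sigma)\le\sum_{i=1}^d \hat r_i s_i$ (an immediate corollary of the Birkhoff--von Neumann theorem), where the $s_i$ are the ordered eigenvalues of $\sigma$, the operator $\tilde\sigma=\sum_{i=1}^d s_i\,|b_i\rangle\langle b_i|$ built in the eigenbasis of $\hat\rho$ has the same spectrum, hence the same purity $\Tr(\tilde\sigma^2)=\Tr(\sigma^2)$, and therefore does at least as well in \eqref{eq:super_fidelity_maximization}. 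With that single lemma inserted, your chain of inequalities is complete and the remainder of your write-up goes through unchanged.
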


Again, we content ourselves here with outlining the proof architecture necessary to establish such a result and refer to \autoref{sec:proofs} for a detailed analysis.

Note that since we relaxed the maximization constraints, $\sigma^\sharp$ in general fails to be positive-semidefinite and is thus not a valid density operator, though we do not use it as such.  In particular, the bound is not tight when $\mathrm{d}\rho$ is supported only on pure states---as might be evident from the possibility of non-positive states arising from the $(\hat \rho-\frac{1}{d}\mathbbm 1)$ term in \eqref{eq:optimizer}.
On the other hand, the distribution is known and thus in the case of a distribution supported only on pure states, one should consult the exact solution in \autoref{thm:pure}.

Conversely, if $\sigma^\sharp$ happens to be a state, it also solves the optimization \eqref{eq:optimization1} and the analytical bound \eqref{eq:main_bound} exactly reproduces an a priori tighter one.  In all of our numerical experiments, some of which are presented below, this was indeed the case.

It is also worthwhile to point out that super-fidelity---the bound in \eqref{eq:super_fidelity}---and the actual fidelity coincide for one qubit, i.e. for $d = 2$ \cite{Miszczak2009Sub}. 
Also replacing positive semidefiniteness by bounded purity yields the same feasible set for that particular case.
Consequently the bound \eqref{eq:main_bound}
reproduces one of the main results in \cite{Bagan2006Optimal}:

\begin{corollary} \label{cor:bagan}
In the single-qubit case (i.e. $d = 2$) the bound \eqref{eq:main_bound}
exactly reproduces the maximum average fidelity 
 in \cite[Equation (2.9)]{Bagan2006Optimal} and $\sigma^\sharp$ is the optimal estimator.
\end{corollary}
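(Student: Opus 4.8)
The plan is to show that in dimension $d=2$ \emph{both} relaxations used to pass from the exact problem \eqref{the full problem} to the closed-form bound \eqref{eq:main_bound} are in fact tight, so that the inequality becomes an equality and $\sigma^\sharp$ becomes a bona fide density operator; the reproduction of \cite[Equation~(2.9)]{Bagan2006Optimal} is then a direct substitution.

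First I would recall the two facts isolated in the text preceding the corollary. (i) For a single qubit super-fidelity and fidelity coincide \cite{Miszczak2009Sub}, so \eqref{eq:super_fidelity} holds with equality; taking the expectation over $\rho$ (the factor $\sqrt{1-\Tr(\sigma^2)}$ is constant in $\rho$) turns \eqref{eq:super_fidelity_maximization} into an \emph{equality}, and the Birkhoff--von Neumann reduction of \autoref{lem:super_aux} is an exact equivalence, so \eqref{eq:optimization1} computes the true maximal average fidelity. (ii) The subsequent ellipsoidal relaxation---replacing $s_i\ge 0$ by the single purity constraint $\sum_i s_i^2\le 1$---is exact for $d=2$. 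Indeed the trace constraint gives $s_1+s_2=1$, whence $s_1^2+s_2^2 = 1-2s_1 s_2$; thus $s_1^2+s_2^2\le 1$ is equivalent to $s_1 s_2\ge 0$, which together with $s_1+s_2=1$ forces $s_1,s_2\ge 0$. Hence the relaxed feasible region coincides exactly with the qubit eigenvalue simplex, and the Lagrange-multiplier optimum derived in \autoref{sec:proofs} is the genuine optimizer of \eqref{eq:optimization1}.

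Combining (i) and (ii), the entire chain of inequalities collapses to equalities, so \eqref{eq:main_bound} reports the exact maximal average fidelity and the optimizer \eqref{eq:optimizer} satisfies the positivity constraints, i.e. $\sigma^\sharp\in\mathcal S$ is a valid estimator. It then remains to match \eqref{eq:main_bound} against \cite[Equation~(2.9)]{Bagan2006Optimal}. Setting $d=2$ in \eqref{eq:main_bound} gives
\begin{equation*}
\max_{\sigma\in\mathcal S}\mathbb E_\rho[F(\rho,\sigma)]
= \tfrac{1}{2}\left(1+\sqrt{2\,p_\rho^2 + 2\,\Tr(\hat\rho^2)-1}\right).
\end{equation*}
Writing each state in the Bloch representation $\rho=\tfrac12(\mathbbm 1+\vec r\cdot\vec\sigma)$, one has $1-\Tr(\rho^2)=\tfrac12(1-|\vec r|^2)$ and $\Tr(\hat\rho^2)=\tfrac12(1+|\bar{\vec r}|^2)$ with $\bar{\vec r}=\mathbb E_\rho[\vec r]$, so that $2\,p_\rho^2=\mathbb E_\rho[\sqrt{1-|\vec r|^2}]^2$ and $2\,\Tr(\hat\rho^2)-1=|\bar{\vec r}|^2$. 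Substituting reproduces the expression of Bagan \emph{et al.}; an analogous substitution in \eqref{eq:optimizer} identifies $\sigma^\sharp$ with their optimal estimator.

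The only genuinely delicate point is step (ii): one must verify that the ellipsoidal relaxation does not enlarge the feasible set in $d=2$, which is exactly the short algebraic identity $s_1^2+s_2^2=1-2s_1s_2$ above. Everything else is bookkeeping---carrying the now-equalities through the derivation of \autoref{sec:proofs} and reconciling the normalization conventions of the Bloch parametrization with those of \cite{Bagan2006Optimal}.
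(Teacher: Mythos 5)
Your proposal is correct and follows essentially the same route as the paper: establish that for $d=2$ both relaxations are tight (super-fidelity equals fidelity, and the constraint $s_1+s_2=1,\ s_1^2+s_2^2\le 1$ cuts out exactly the simplex), then match \eqref{eq:main_bound} against \cite[Equation~(2.9)]{Bagan2006Optimal} via the Bloch-vector identities $\|\vec r\|_2^2 = 2\Tr(\rho^2)-1$ and $\Tr(\hat\rho^2)=\tfrac12(1+\|\bar{\vec r}\|_2^2)$. The only difference is cosmetic---you substitute into \eqref{eq:main_bound} to recover Bagan's formula whereas the paper rewrites Bagan's $F=\tfrac12(1+\sum_\chi\|\mathbf{V}_\chi\|_2)$ in density-matrix language, taking a little more care to pass from their data-averaged fidelity to the posterior average via Bayes' rule, a bookkeeping step you compress into ``reconciling normalization conventions.''
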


Finally, we want to emphasize that establishing bounds on the average fidelity by using the super-fidelity instead of the Fuchs-van de Graaf inequalities 
leads to strictly better results:

\begin{corollary} \label{cor:majorization}
Let the dimension $d$ and the distribution $\mathrm{d}\rho$ over states be arbitrary.
Then, the bound presented in \autoref{thm:boundfuchs} {(Fuchs van-de Graaf inequality)}
is either trivial---i.e. equal to one---or it strictly majorizes the one presented \autoref{thm:bound} {(super-fidelity)}.
\end{corollary}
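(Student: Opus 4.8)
The plan is to express both bounds through a small number of scalar summary statistics and thereby reduce the comparison to an elementary one-variable inequality. Denote the non-commutative variance appearing in \autoref{thm:boundfuchs} by $V := \Tr(\mathbb{E}_\rho[\rho^2] - \hat\rho^2) = \mathbb{E}_\rho[\Tr(\rho^2)] - \Tr(\hat\rho^2)$, so that the Fuchs--van de Graaf bound reads $B_{\mathrm{FvG}} = 1 - V/4$. Convexity of $X \mapsto \Tr(X^2)$ (equivalently, Jensen's inequality) gives $V \geq 0$, with $V = 0$ precisely when $\mathrm{d}\rho$ is a point mass; this is exactly the case $B_{\mathrm{FvG}} = 1$ that the statement singles out as \emph{trivial}. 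Hence I may assume $V > 0$ and need only show that the super-fidelity bound $B_{\mathrm{SF}}$ of \autoref{thm:bound} is strictly smaller.

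First I would bring the two bounds onto common ground. The super-fidelity bound depends on the distribution only through $Q := p_\rho^2 + \Tr(\hat\rho^2)$, and it is manifestly increasing in $Q$. The crucial step is to control $Q$ in terms of $V$: applying Jensen's inequality to the concave square root gives $p_\rho^2 = \mathbb{E}_\rho[\sqrt{1-\Tr(\rho^2)}]^2 \leq \mathbb{E}_\rho[1 - \Tr(\rho^2)] = 1 - \mathbb{E}_\rho[\Tr(\rho^2)]$, and therefore $Q \leq 1 - V$. Monotonicity in $Q$ then yields $B_{\mathrm{SF}} \leq \frac{1}{d}\bigl(1 + \sqrt{d-1}\,\sqrt{(d-1) - dV}\bigr) =: g(V)$.

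It remains to prove the purely scalar statement $g(V) < 1 - V/4$ for $V > 0$. I would first record that $V$ ranges in $(0,\,(d-1)/d\,]$, since $\mathbb{E}_\rho[\Tr(\rho^2)] \leq 1$ while $\Tr(\hat\rho^2) \geq 1/d$. On this interval the inequality rearranges to $\sqrt{(d-1)\bigl((d-1) - dV\bigr)} < (d-1) - dV/4$, whose right-hand side is non-negative because $V \leq (d-1)/d$. Squaring and simplifying collapses everything to $\frac{(d-1)dV}{2} + \frac{d^2 V^2}{16} > 0$, which is immediate for $d \geq 2$ and $V > 0$. Observe that $g(0) = 1 = B_{\mathrm{FvG}}|_{V=0}$, so the strictness is genuinely tied to $V > 0$, consistent with the dichotomy in the statement.

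The main obstacle I anticipate is the bridging step: the two bounds are built from different statistics---$V$ for Fuchs--van de Graaf, $Q$ for super-fidelity---and it is not obvious a priori that they are even comparable. The key realization is that Jensen's inequality applied to $\sqrt{1 - \Tr(\rho^2)}$ converts the purity-dependent quantity $p_\rho$ into the same average purity $\mathbb{E}_\rho[\Tr(\rho^2)]$ that governs $V$, collapsing both bounds onto the single parameter $V$. Once this reduction is in place the remaining work is an elementary calculation, and the only care needed is in pinning down the admissible range of $V$ so that the squaring step is valid and the boundary equality at $V = 0$ is correctly identified with the trivial case.
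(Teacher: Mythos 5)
Your proof is correct and follows essentially the same route as the paper's: both isolate the variance-like quantity $s_\rho$ (your $V$), dispose of the trivial case $s_\rho=0$, and apply Jensen's inequality to the concave square root so that the super-fidelity bound is dominated by a function of $s_\rho$ alone. The only difference is in the finishing step---where the paper closes with a chain of elementary inequalities (rewriting $d-1-ds_\rho$ as $(d-1)(1-s_\rho)-s_\rho$, then using $\sqrt{1-s_\rho}\le 1-\tfrac{1}{2}s_\rho$ and $\tfrac{d-1}{d}\ge\tfrac{1}{2}$), you verify the one-variable inequality $g(V)<1-V/4$ directly by rearranging and squaring, having correctly checked that $V\le (d-1)/d$ keeps both sides in range; this is equally valid and arguably more transparent.
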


\section{Numerical Experiments}

Note that fidelity achieved by \emph{any} estimator is a lower bound on the one achieved by the optimal estimator.  A particularly convenient and generally well motivated \cite{BlumeKohout2006Accurate} estimator is the mean of the distribution $\hat \rho=\mathbb E_\rho[\rho]$.  Our findings underline that for distributions of states relevant to tomography, the mean is very near-optimal.  
In the context of tomography the mean is furthermore arguably the most convenient estimator,
since every other quantity of interest requires its calculation anyway.  

Finding an analytical expression for the posterior distribution is a very challenging
 problem, let alone performing the multidimensional integrals required for the calculation of the expectations above.   Thus, we turn to numerics.  In particular, we use the Sequential Monte Carlo (SMC) algorithm, which has been successfully applied to quantum statistical problems in the context of dynamical parameter estimation \cite{Chase2009Singleshot,Granade2012Robust,Wiebe2013Hamiltonian} and quantum state estimation \cite{Huszar2012Adaptive, Ferrie2014High,Ferrie2014Quantum}. Also, this algorithm is available as an open-source implementation in Python \cite{qinfer}.

Employing SMC allows us to perform the Bayesian updating and averaging. A complete and detailed discussion of the algorithm appears in Ref. \cite{Granade2012Robust} and thus we will not repeat the details here, but we will sketch the idea.  The algorithm starts  with a set of quantum states $\{\rho_j\}_{j=0}^{n}$, the elements of which are called \emph{particles}.  Here, $n = {|\{\rho_j\}|}$ is the number of particles and controls the accuracy of the approximation.  By approximating the prior distribution by a weighted sum of Dirac delta-functions,
\begin{equation}
\Pr(\rho) \approx \sum_{j=1}^{n} w_j  \delta(\rho - \rho_j),
\end{equation}
Bayes' rule then becomes
\begin{equation}
w_j \mapsto \Pr(\texttt{data}|\rho_j) w_j,
\end{equation}
followed by a normalization step.  The SMC algorithm is designed to approximate expectation values, such that
\begin{equation}
\mathbb E_\rho[f(\rho)] \approx \sum_{j=1}^n w_j f(\rho_j),
\end{equation}
for any function $f$.  In other words, the SMC algorithm allows us to efficiently compute the multidimensional integrals with respect to the measure defined by the posterior probability distribution.  We use this algorithm, as implemented by \cite{qinfer}, to numerically compute averages arising in simulated tomography experiments.  By doing so, we explore the efficacy of our claims for a variety of distributions relevant to practice and found natural in experimentation.

\begin{figure}
\includegraphics[width=0.6\columnwidth]{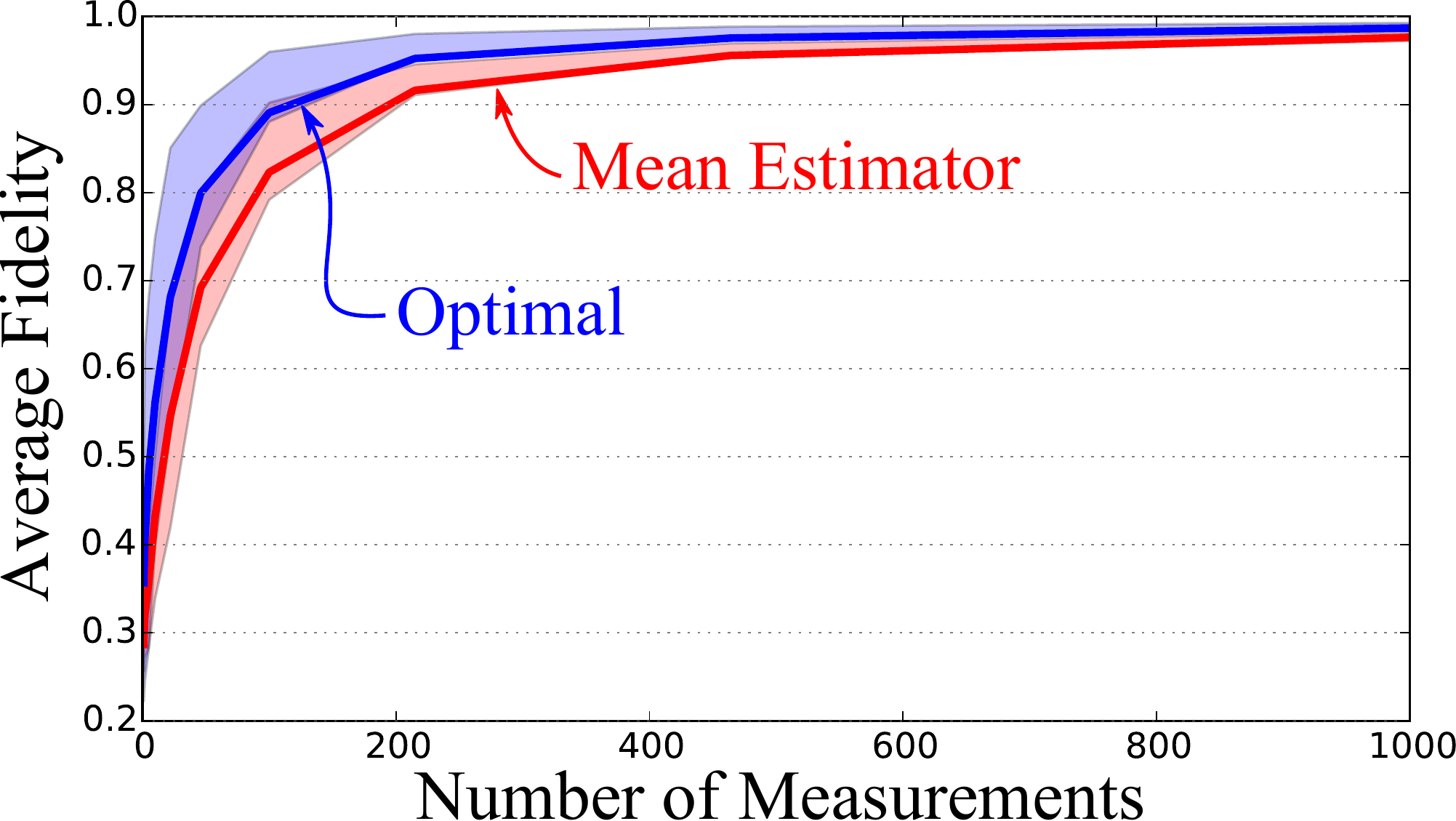}
\caption{\label{fig:pure}
The average fidelity as a function of the number of single-shot measurements of the Haar uniform measurement.  The prior distribution is here is also the Haar uniform measure on two qubits.  The lines are the medians and shaded areas the interquartile ranges over 100 trials.}
\end{figure}

Recall the sharp distinction between measures supported on pure states and those with full support.  We use the fact that \autoref{thm:pure} provides us with the optimal estimator in the former case to lend support to the claim that the mean estimator is a good candidate for a computationally simple, yet still near-optimal, alternative to solving the optimization problem in general.  In \autoref{fig:pure}, we present the results of numerical simulations on two qubits.  Plotted is the average fidelity achieved by the optimal estimator (see \autoref{thm:pure}) and the mean estimator $\mathbb E_\rho[\rho]$.  The average is taken with respect to a distribution that begins as the Haar invariant measure on pure states and is updated through simulated measurement data, where the measurement is the
``uniform POVM'' consisting of all pure states, distributed uniformly according to the Haar measure.  For independent measurements---i.e. local, non-adaptive ones---this measurement is optimal \cite[Theorem 3.1]{Holevo1982Probabilistic}.  We see that the mean estimator's fidelity tracks the optimal fidelity quite well.

\begin{figure}
\begin{tabular}{cc}
\begin{minipage}{8.8cm}
\includegraphics[width=\columnwidth]{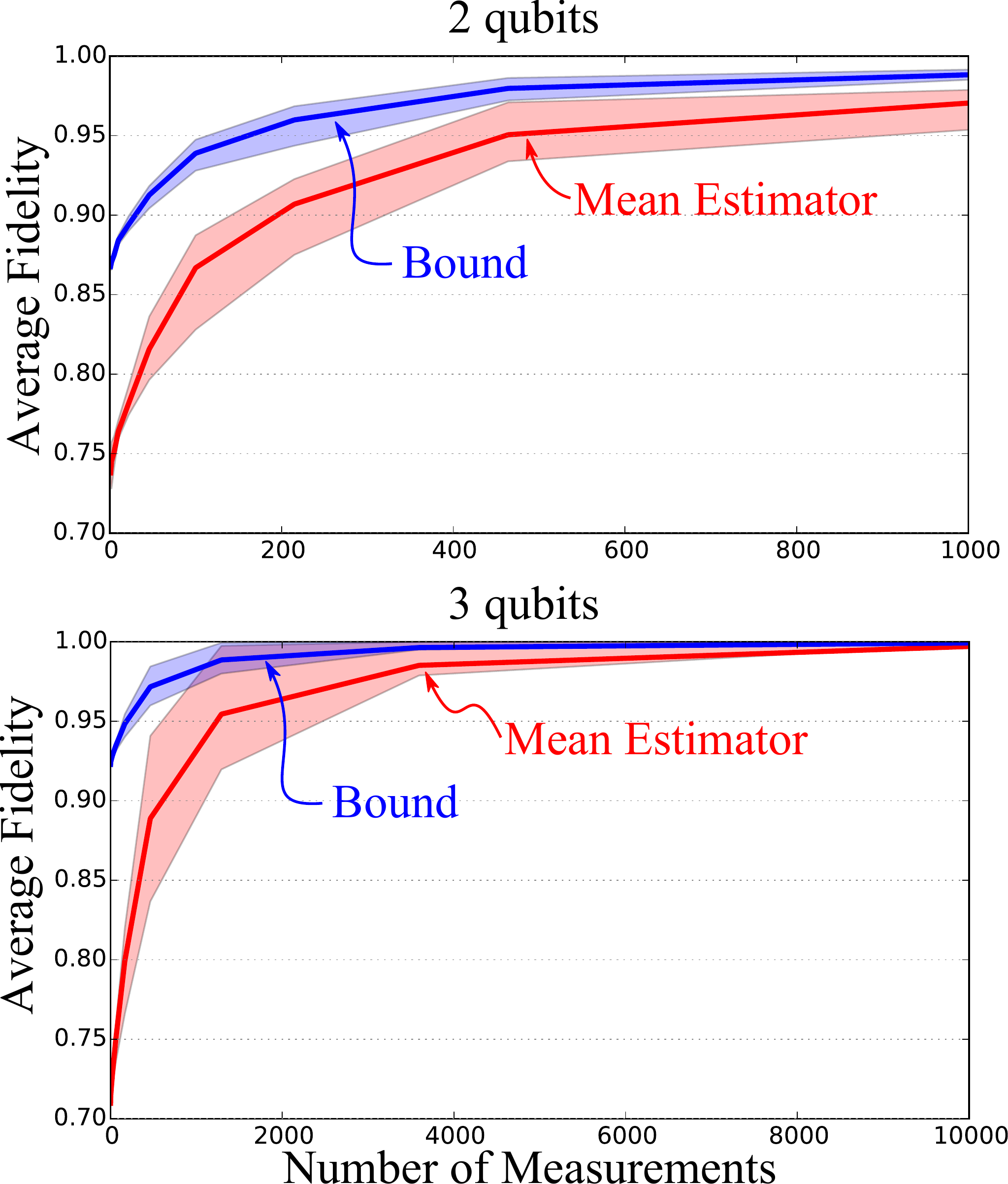}
\end{minipage}
&
\begin{minipage}{8.8cm}
\includegraphics[width=\columnwidth]{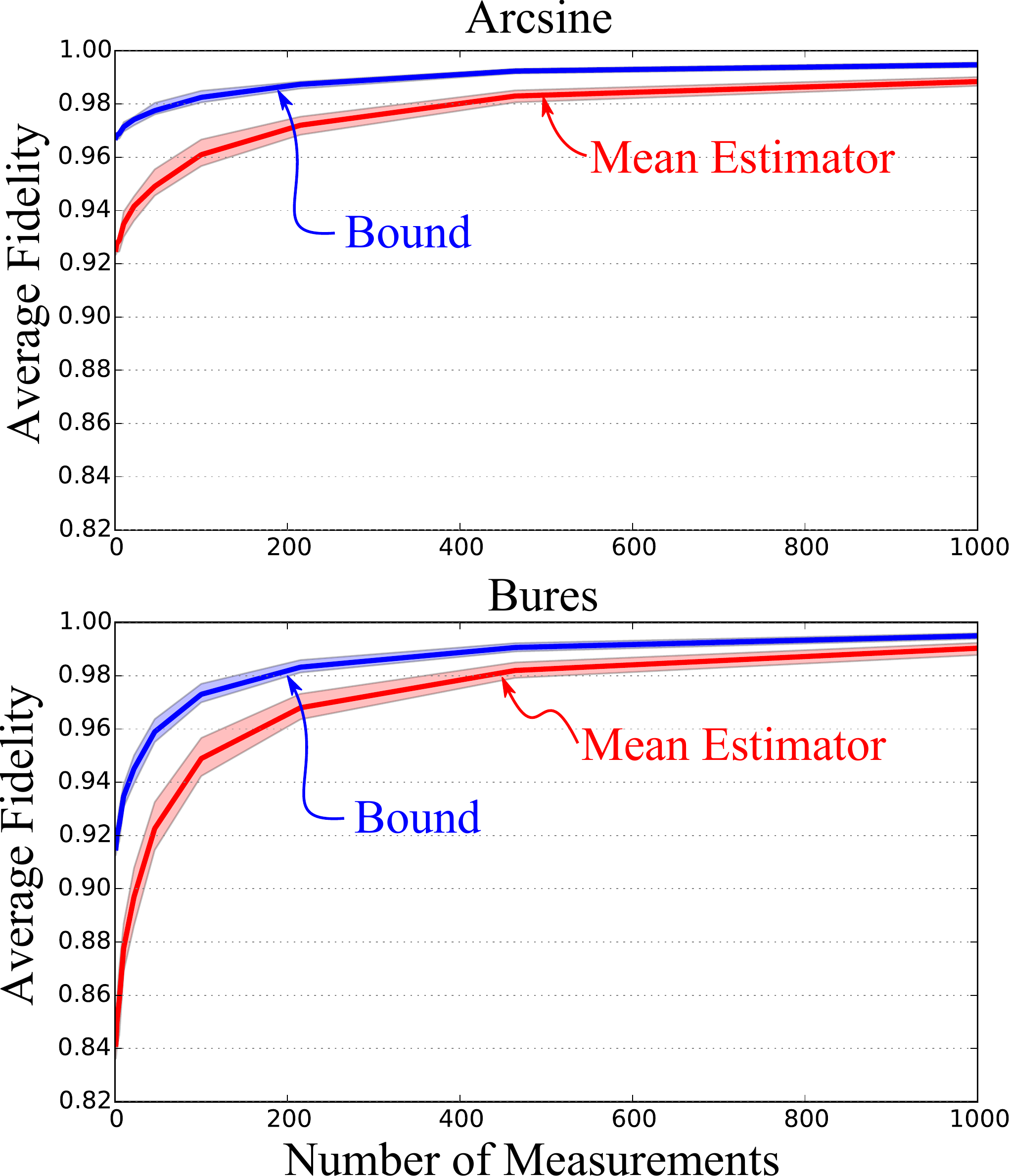}
\end{minipage}
\end{tabular}
\caption{\label{fig:mixed}
These plots depict the average fidelity as a function of the number of single-shot measurements of the Haar uniform measurements. \newline
\textbf{First column:}
The prior distribution is here is Hilbert-Schmidt measure on two and three qubit mixed quantum states.  \newline
\textbf{Second column:}  
The prior distribution for the upper plot is the \emph{Arcsine} distribution while for the lower plot the \emph{Bures} distribution was used---both are supported on two qubit mixed quantum states (again, see \cite{Zyckowski2011Generating} for a review of distributions of density matrices).  \newline
In all cases, the solid lines are the medians and shaded areas illustrate the interquartile ranges over 100 trials. }
\end{figure}

In \autoref{fig:mixed}, we plot the average fidelity of the mean estimator against our bound \eqref{eq:main_bound} for measures supported also on mixed quantum states.  Again, we simulate measurement data to get an accurate sense of how well the average fidelity of the mean estimator performs with respect to our bound for distributions relevant to tomography.  In this case, the \emph{prior} distribution is either the Hilbert-Schmidt measure (left column), or the  \emph{arcsine} and \emph{Bures} distributions \cite{Zyckowski2011Generating} for two qubits (right column).
In each case, many other natural distributions appear as we update our prior through Bayes' rule.  We see again that the mean estimator is a ``good'' estimator in that it comes close to the bound on the optimal fidelity and is the easiest  non-trivial average quantity to evaluate.

\section{Proofs} \label{sec:proofs}

In this section we provide detailed derivations and proofs of the statements presented in \autoref{sec:results}.

\subsection{A detailed proof of \autoref{thm:boundfuchs}}

Recall that in \autoref{thm:boundfuchs} we have claimed that the bound
\begin{equation}
\max_{\sigma \in \mathcal{S}} \mathbb{E}_\rho \left[ F (\rho, \sigma ) \right]
\leq 1 - \frac{1}{4} \mathrm{Tr} \left( \mathbb{E}_\rho \left[ \rho^2 \right] - \mathbb{E} \left[ \rho \right]^2 \right), \label{eq:fuchs_proof}
\end{equation}
is valid for any prior distribution $\mathrm{d} \rho$.
In order to derive such a statement, we start with inequality \eqref{eq:fuchs_two_norm}
\begin{equation*}
F ( \rho, \sigma ) \leq 1 - \frac{1}{4} \| \rho - \sigma \|_2^2,
\end{equation*}
which is a direct combination of the Fuchs-van de Graaf inequalities and the norm inequality $\| \cdot \|_2 \leq \| \cdot \|_1$. 
As such it is valid for any two states $\rho,\sigma \in \mathcal{S}$ which in turn assures that it remains valid upon taking expectations over $\mathrm{d} \rho$  on both sides:
\begin{equation}
\mathbb{E}_\rho \left[ F (\rho, \sigma ) \right]
\leq 1 - \frac{1}{4} \mathbb{E}_\rho \left[ \| \rho - \sigma \|_2^2 \right].
\end{equation}
Moreover,  we can optimize over $\sigma$ on both sides to obtain
\begin{align}
\max_{\sigma \in \mathcal{S}} \mathbb E_\rho \left[ F(\rho, \sigma) \right]
\leq 1 - \frac{1}{4} \min_{\sigma \in \mathcal{S}} \mathbb E_\rho \left[ \| \rho - \sigma \|_2^2 \right]. \label{eq:fuchs_proof_aux1}
\end{align}
The minimum on the right-hand side can in fact be calculated analytically. 
To this end, we define the function 
\begin{equation*}
f (\sigma) := \mathbb E_\rho \left[ \| \rho - \sigma \|_2^2 \right] 
= \Tr \left( \mathbb E_\rho \left[ \rho^2 \right] \right)
- 2 \Tr \left( \mathbb E_\rho [\rho] \sigma \right) + \Tr \left( \sigma^2 \right).
\end{equation*}
Note that $f (\sigma)$ is convex, because it corresponds to a weighted average of convex norm-functions $\| \sigma - \rho \|_2^2$ and its
 matrix-valued derivative corresponds to
\begin{align}
f' (\sigma) = - 2 \mathbb E_\rho \left[ \rho \right] + 2 \sigma. 
\end{align}
This derivative vanishes if and only if $\sigma^\sharp = \mathbb E_\rho \left[ \rho \right]$ 
holds and convexity of $f(\sigma)$ implies that this critical state corresponds to the unique minimum. The corresponding function value amounts to
\begin{align}
f \left( \sigma^\sharp \right)
= \Tr \left( \mathbb E_\rho \left[ \rho^2 \right] \right) 
- \Tr \left( \mathbb E_\rho \left[ \rho \right]^2 \right)
\end{align}
and reinserting this global minimum into \eqref{eq:fuchs_proof_aux1}
yields the desired bound \eqref{eq:fuchs_proof}. 

\subsection{A detailed derivation of \autoref{thm:bound}} 
\label{sub:main_proof}

Our main theoretical statement---\autoref{thm:bound}---follows from a three step procedure which was already briefly outlined in \autoref{sec:results}.

The first step invokes  the concept of super-fidelity \cite{Miszczak2009Sub} which assures
\begin{equation*}
\max_{\sigma \in \mathcal{S}} \mathbb{E}_\rho \left[ F \left( \rho, \sigma \right) \right] 
\leq \max_{\sigma \in \mathcal{S}} 
\left( \mathrm{Tr} \left( \hat{\rho} \sigma \right) + p_{\rho} \sqrt{1- \mathrm{Tr} \left( \sigma^2 \right)} \right) ,
\end{equation*}
with $\hat{\rho} = \mathbb{E}_\rho \left[ \rho \right]$ and $p_\rho = \mathbb{E}_\rho \left[ \sqrt{1- \mathrm{tr} \left( \rho^2 \right)} \right]$ for any distribution $\mathrm{d}\rho$.
As it turns out, the optimization on the right hand side of this equation is much more tractable than the original problem on the left hand side.
This is manifested by the following technical statement which is a direct consequence of the celebrated Birkhoff-von Neumann theorem.

\begin{lemma} \label{lem:super_aux}
Fix any  $p_\rho \geq 0$ and suppose that $\hat{\rho} \in \mathcal{S}$ is an arbitrary density operator with eigenvalue decomposition $\hat{\rho} = \sum_{i=1}^d \hat{r}_i |b_i \rangle \! \langle b_i|$. Then the optimization 
\begin{align}
\underset{\sigma \in L \left( \mathbb{C}^d \right)}{\textrm{maximize}} & \quad
 \mathrm{Tr} \left( \hat{\rho} \sigma \right) + p_{\rho} \sqrt{1- \mathrm{Tr} \left( \sigma^2 \right)}, \label{eq:super_aux1}\\
\textrm{subject to} & \quad \sigma \geq 0, \;\mathrm{Tr} (\sigma) = 1. \nonumber
\end{align}
is equivalent to solving
\begin{align}
\underset{s_1,\ldots,s_d \in \mathbb{R}}{\textrm{maximize}} & \quad \sum_{i=1}^d \hat{r}_i s_i + p_\rho \sqrt{1 - \sum_{i=1}^d s_i^2},
 \label{eq:super_aux2}\\
\textrm{subject to} & \quad \sum_{i=1}^d s_i = 1, \nonumber \\
& \quad s_i \geq 0 \quad 1 \leq i \leq d. \nonumber
\end{align}
Moreover, there is a one-to-one correspondence between any feasible array $(s_1,\ldots,s_d)$ of this problem and the density operator $\tilde{\sigma} = \sum_{i=1}^d s_i |b_i \rangle \! \langle b_i |$.
\end{lemma}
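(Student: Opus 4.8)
The plan is to exploit the structure of the objective in \eqref{eq:super_aux1}, which splits into a linear term $\Tr(\hat\rho\sigma)$ and a term $p_\rho\sqrt{1-\Tr(\sigma^2)}$ that depends on $\sigma$ only through its purity $\Tr(\sigma^2)$, and hence only through the spectrum of $\sigma$. This separation invites a two-stage optimization: first fix the eigenvalues of $\sigma$, which freezes the second term, and then maximize the remaining linear term $\Tr(\hat\rho\sigma)$ over all density operators having that prescribed spectrum.

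First I would reduce to operators diagonal in the eigenbasis of $\hat\rho$. For a fixed feasible $\sigma\in\mathcal S$ with eigenvalues $\lambda_1\geq\cdots\geq\lambda_d$, I would invoke the von Neumann trace inequality, which is a standard corollary of the Birkhoff-von Neumann theorem \cite[Theorem 8.7.6]{Horn1990Matrix}, to obtain
\begin{equation*}
\Tr(\hat\rho\sigma)\leq \sum_{i=1}^d \hat r_i^\downarrow\,\lambda_i^\downarrow,
\end{equation*}
where $\hat r_i^\downarrow$ and $\lambda_i^\downarrow$ denote the eigenvalues of $\hat\rho$ and $\sigma$ arranged in decreasing order. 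The point is that equality is attained by the density operator $\tilde\sigma$ that is diagonal in the eigenbasis $\{|b_i\rangle\}$ of $\hat\rho$, with eigenvalue $\lambda_i^\downarrow$ assigned to the eigenvector belonging to $\hat r_i^\downarrow$. Since $\tilde\sigma$ shares its spectrum with $\sigma$, the purity term is unchanged while the linear term can only increase. Hence every feasible $\sigma$ admits a feasible $\tilde\sigma$ commuting with $\hat\rho$ and achieving an objective value at least as large, so the maximum in \eqref{eq:super_aux1} is attained on the density operators diagonal in $\{|b_i\rangle\}$.

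Second I would carry out the parametrization. Writing such a diagonal density operator as $\tilde\sigma=\sum_{i=1}^d s_i|b_i\rangle\!\langle b_i|$, the constraints $\sigma\geq 0$ and $\Tr(\sigma)=1$ translate into $s_i\geq 0$ and $\sum_i s_i=1$, while $\Tr(\hat\rho\tilde\sigma)=\sum_i\hat r_i s_i$ and $\Tr(\tilde\sigma^2)=\sum_i s_i^2$. Substituting these identities into the objective reproduces exactly the commutative program \eqref{eq:super_aux2}, and the map $(s_1,\ldots,s_d)\mapsto\tilde\sigma$ is manifestly a bijection onto the diagonal density operators. This establishes both the claimed one-to-one correspondence and the equivalence of the two optimization problems.

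The main obstacle is the nonlinearity introduced by the purity term $p_\rho\sqrt{1-\Tr(\sigma^2)}$, which blocks the purely linear-programming argument used in \autoref{thm:pure}. The key observation resolving this is that this term is a function of the spectrum alone, which is precisely what lets the spectrum be held fixed while the Birkhoff-von Neumann corollary is applied to the genuinely linear part. A secondary point requiring care is the concordant ordering of eigenvalues needed for equality in the trace inequality; since \eqref{eq:super_aux2} already optimizes freely over all arrays $(s_1,\ldots,s_d)$, the optimal ordering is selected automatically and need not be tracked explicitly.
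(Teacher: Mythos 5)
Your proposal is correct and follows essentially the same route as the paper: both invoke the von Neumann trace inequality (the corollary of the Birkhoff--von Neumann theorem in \cite[Theorem 8.7.6]{Horn1990Matrix}) to replace any feasible $\sigma$ by a spectrum-preserving $\tilde{\sigma}$ diagonal in the eigenbasis of $\hat{\rho}$, noting that the purity term is unchanged while the linear term cannot decrease, and then parametrize the diagonal operators to obtain the commutative program. Your explicit remark about the concordant ordering needed for equality in the trace inequality is a point the paper's proof leaves implicit, but the argument is the same.
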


\begin{proof}
At the heart of this statement is an immediate corollary of the Birkhoff-von Neumann Theorem---see e.g. \cite[Theorem 8.7.6]{Horn1990Matrix}.
For $d \times d$ Hermitian matrices $\rho,\sigma$ this corollary assures
\begin{equation}
\mathrm{Tr} \left( \rho \sigma \right) \leq \sum_{i=1}^d r_i s_i, \label{eq:birkhoff}
\end{equation}
where $r_i$ and $s_i$ denote the eigenvalues of $\rho$ and $\sigma$, respectively, arranged in non-increasing order.
If $\hat{\rho}$ has eigenvalue decomposition $\hat{\rho} = \sum_{i=1}^d \hat{r}_i |b_i \rangle \! \langle b_i|$, the right hand side of \eqref{eq:birkhoff}
corresponds to $\mathrm{Tr} \left( \hat{\rho} \tilde{\sigma} \right)$ where $\tilde{\sigma} = \sum_{i=1}^d s_i |b_i \rangle \! \langle b_i|$.
Clearly, if $\sigma \in \mathcal{S}$ was a quantum state to begin with, so is $\tilde{\sigma}$, because the spectra of $\sigma$ and $\tilde{\sigma}$ coincide.
Moreover, such a definition assures that both states have equal purity, i.e. $\mathrm{Tr} (\sigma^2 ) = \mathrm{Tr} (\tilde{\sigma}^2)$. 
Consequently, for any feasible point $\sigma$ of the optimization \eqref{eq:super_aux1}, there is a $\tilde{\sigma}$ 
of the above form which admits a larger value in the optimization. Inserting the particular form of $\tilde{\sigma}$
into this program results in \eqref{eq:super_aux2}.
\end{proof}

In order to arrive at the bound presented in \autoref{thm:bound}, we employ one more relaxation which is going to allow us to solve the resulting problem analytically in full generality.
To be concrete, we replace the non-negativity constraints ($s_i \geq 0$) in \eqref{eq:super_aux2} by the weaker demand that the optimization vector $(s_1,\ldots,s_d)^T \in \mathbb{R}^d$ 
is contained in the Euclidean unit ball---i.e. $\sum_{i=1}^d s_i^2 \leq 1$.
Note that we explore the geometric properties of such a relaxation in \autoref{sec:geometry}. In a nutshell it corresponds to the tightest possible elliptical relaxation of the feasible set in \eqref{eq:super_aux1}.
By doing so, we arrive at the problem
\begin{align}
\underset{s_1,\ldots,s_d \in \mathbb{R}}{\textrm{maximize}} & \quad \sum_{i=1}^d \hat{r}_i s_i + p_\rho \sqrt{1 - \sum_{i=1}^d s_i^2}, \label{ALLCAPS!!!} \\
\textrm{subject to} & \quad \sum_{i=1}^d s_i = 1, \; \sum_{i=1}^d s_i^2 \leq 1,\nonumber 
\end{align}
which can be solved analytically via the method of Lagrangian multipliers:

\begin{lemma} \label{lem:super_aux2}
Let $\hat{r}_1,\ldots,\hat{r}_d$ denote the eigenvalues of any density operator and fix $p_\rho > 0$. Then the problem \eqref{ALLCAPS!!!} has a unique solution. The optimal value corresponds to
\begin{equation*}
 \frac{1}{d} \left( 1 + \sqrt{d-1} \sqrt{d \left(p_\rho^2
+ \Tr \left( \hat{\rho}^2 \right) \right) -1} \right)
\end{equation*}
and the array $(s_1^\sharp,\ldots,s_d^\sharp)$ achieving this optimum corresponds to the particular matrix
\begin{equation}
\sigma^\sharp
= \frac{1}{d} \mathbbm{1} 
+ \sqrt{\frac{d-1}{d \left( p_\rho^2 + \Tr \left( \hat{\rho}^2\right)\right) - 1}}
\left( \hat{\rho} - \frac{1}{d} \mathbbm{1} \right). \label{eq:optimal_solution}
\end{equation}
\end{lemma}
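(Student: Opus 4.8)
The plan is to treat \eqref{ALLCAPS!!!} as a convex program and solve it through the Karush--Kuhn--Tucker (KKT) conditions. First I would record the structural facts that make this routine: the objective $g(s) := \sum_i \hat r_i s_i + p_\rho \sqrt{1-\sum_i s_i^2}$ is concave on the Euclidean unit ball (an affine term plus $p_\rho$ times the concave map $\sqrt{1-\|s\|^2}$), and the feasible set---the intersection of the hyperplane $\sum_i s_i = 1$ with the ball $\sum_i s_i^2 \le 1$---is convex. Since the Hessian of $\sqrt{1-\|s\|^2}$ is negative definite on the open ball, $g$ is strictly concave there, so any maximizer lying in the relative interior (hyperplane $\cap$ open ball) is unique. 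A key preliminary observation is that the maximizer cannot sit on the sphere $\sum_i s_i^2 = 1$: as $\|s\| \to 1$ the gradient of $p_\rho\sqrt{1-\|s\|^2}$ blows up and points strictly inward, so moving off the sphere increases $g$. Hence, for $p_\rho > 0$, the inequality constraint is inactive and I may drop it, retaining only the equality constraint.

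With the inequality gone, I would introduce a single Lagrange multiplier $\mu$ for $\sum_i s_i = 1$ and impose $\partial_{s_i}[g(s) - \mu\sum_i s_i] = 0$. This yields $\hat r_i - \tfrac{p_\rho s_i}{\sqrt{1-\|s\|^2}} - \mu = 0$ for every $i$, i.e.\ each $s_i$ is an affine function of the eigenvalue $\hat r_i$. Imposing $\sum_i s_i = 1$ together with $\sum_i \hat r_i = \Tr(\hat\rho) = 1$ fixes the affine offset and collapses the stationarity conditions to the one-parameter family $s_i = \tfrac{1}{d} + \lambda(\hat r_i - \tfrac{1}{d})$, which is precisely the spectrum of the matrix $\sigma^\sharp$ in \eqref{eq:optimal_solution} with $\lambda$ still to be pinned down. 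This is the conceptual heart of the argument: KKT stationarity forces the optimizer into this two-term shape, reproducing (via the correspondence of \autoref{lem:super_aux}) the ansatz $\tfrac{1}{d}\mathbbm{1} + \lambda(\hat\rho - \tfrac{1}{d}\mathbbm{1})$.

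What remains is a one-variable calculus problem. Writing $V := \Tr(\hat\rho^2) - \tfrac{1}{d} \ge 0$ and substituting the family into the objective gives, after using $\sum_i(\hat r_i - \tfrac{1}{d}) = 0$ and $\sum_i(\hat r_i - \tfrac{1}{d})^2 = V$, the scalar function $g(\lambda) = \tfrac{1}{d} + \lambda V + p_\rho\sqrt{\tfrac{d-1}{d} - \lambda^2 V}$. Setting $g'(\lambda) = 0$ and squaring yields $\lambda^{\sharp\,2} = \tfrac{(d-1)/d}{p_\rho^2 + V}$, i.e.\ $\lambda^\sharp = \sqrt{\tfrac{d-1}{d(p_\rho^2 + \Tr(\hat\rho^2)) - 1}}$, matching the coefficient in \eqref{eq:optimal_solution}. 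Reinserting $\lambda^\sharp$---and using the stationarity relation $\sqrt{\tfrac{d-1}{d}-\lambda^2 V} = p_\rho\lambda$ to simplify the square-root term---gives the optimal value $\tfrac{1}{d}(1 + \sqrt{d-1}\sqrt{d(p_\rho^2 + \Tr(\hat\rho^2)) - 1})$ after collecting terms. I would also check $\lambda^{\sharp\,2}V = \tfrac{(d-1)V/d}{p_\rho^2+V} < \tfrac{d-1}{d}$, confirming $\|s\|^2 < 1$ and thereby vindicating the earlier claim that the inequality constraint is inactive and the stationary point genuinely interior.

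The main obstacle I anticipate is not a single calculation but making the optimality argument airtight: I must rule out the boundary $\sum_i s_i^2 = 1$ (handled by the inward-pointing-gradient observation above) and confirm that the unique interior stationary point is the global maximum rather than a saddle, which follows from strict concavity of $g$ on the hyperplane--ball interior. A secondary point requiring care is the degenerate case $V = 0$, i.e.\ $\hat\rho = \mathbbm{1}/d$: there $g$ is constant along the family, the stationarity condition directly forces $s_i = \tfrac{1}{d}$, and one verifies by hand that the stated value $\tfrac{1}{d} + p_\rho\sqrt{(d-1)/d}$ is recovered, so the closed form extends continuously to this case.
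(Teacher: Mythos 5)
Your proposal is correct and follows essentially the same route as the paper's proof: drop the ball constraint, apply Lagrange multipliers to the trace constraint, observe that stationarity forces the optimizer into the affine family $\frac{1}{d}\mathbbm{1} + \lambda(\hat{\rho}-\frac{1}{d}\mathbbm{1})$, solve a quadratic for the remaining parameter, and confirm a posteriori that $\langle \vec{s}^\sharp,\vec{s}^\sharp\rangle<1$ so the ignored constraint is inactive. The only differences are cosmetic---you justify inactivity of the ball constraint a priori via the boundary blow-up of the gradient rather than only checking it at the end, and you explicitly treat the degenerate case $\hat{\rho}=\mathbbm{1}/d$, which the paper's parametrization handles implicitly.
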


Note that this result together with the relaxations outlined in this section immediately implies \autoref{thm:bound}
upon inserting the definitions of $p_\rho$ and $\hat{\rho}$.
The assumption $p_\rho >0$ is furthermore non-critical, because, by definition, $p_\rho = 0$ if and only if $\mathrm{d} \rho$ is supported exclusively on pure states. 
This particular case, however, is already fully covered by \autoref{thm:pure}.

\begin{proof}[Proof of \autoref{lem:super_aux2}]
Throughout this proof we shall represent the eigenvalues of the density operator $\hat{\rho}$ as a vector
$\vec{\hat{r}} = \left( \hat{r}_1,\ldots,\hat{r}_d \right)^T \in \mathbb R ^d$. 
Likewise we shall encompass the scalar optimization variables $s_i$ in the vector $\vec{s} \in \mathbb R^d$. 
Furthermore, let $\vec{0} = (0,\ldots,0)^T$ and $\vec{1} = (1,\ldots,1)^T$
denote the ``all-zeros'' and ``all-ones'' vectors on $\mathbb{R}^d$, respectively. 
For $\vec{x},\vec{y} \in \mathbb R^d$, we will also make use of the standard inner product
$\langle \vec{x},\vec{y} \rangle = \sum_{i=1}^d x_i y_i$
and the vectorial inequality $\vec{x} \geq \vec{y}$ shall indicate component-wise inequality, i.e. $x_i \geq y_i$ for all $1 \leq i \leq d$. 

In such a vectorial form, the optimization problem \eqref{eq:super_aux2} corresponds to
\begin{align}
\textrm{maximize} & \quad f \left( \vec{s} \right) = \langle \vec{\hat{r}},\vec{s} \rangle + p_\rho \sqrt{1- \langle \vec{s},\vec{s} \rangle}, \nonumber \\
\textrm{subject to} & \quad g \left( \vec{s} \right) = \langle \vec{1},\vec{s} \rangle =0.
\label{eq:app_relaxed} \\
& \quad \langle \vec{s},\vec{s} \rangle \leq 1. \label{eq:app_sphere_constraint}\nonumber
\end{align}

Note that \eqref{eq:app_relaxed} is a convex optimization problem, as it requires maximizing a concave function over a convex set. 
As such, it has a unique maximum. 
One way of finding this maximum is to apply standard techniques such as the Karush-Kuhn-Tucker (KKT) multiplier method \cite{Boyd2004Convex} which are designed to take into account the inequality constraint  \eqref{eq:app_sphere_constraint}.

However, here we opt for a less direct but considerably more convenient and less cumbersome   approach:
we ignore the inequality constraint in \eqref{eq:app_relaxed} for now and 
employ the standard technique of Lagrangian multipliers (for equality constraints)
in order to find the unique critical point $\vec{s}^\sharp$ of the optimization. 
In a second step, we are going to verify that this vector strictly obeys the additional inequality constraint, we have ignored so far, i.e. $\langle \vec{s}^\sharp, \vec{s}^\sharp \rangle <1$. 
This in turn implies that said inequality constraint is not active at the critical point which in retrospect confirms that we were in fact right to ignore it in the first place.
Finally, the fact that we face a convex optimization problem assures that this unique critical point indeed yields the sought for global maximum of \eqref{eq:app_relaxed}. 

In order to find the critical point $\vec{s}^\sharp$ in question we define the Lagrangian function
\begin{align}
L (\vec{s} ) = f (\vec{s}) + \lambda g (\vec{s}),
\end{align}
where we have---as already announced---ignored the inequality constraint $\langle \vec{s},\vec{s} \rangle \leq 1$. 
As a consequence, $\lambda \in \mathbb{R}$ denotes the single Lagrangian multiplier associated with the remaining normalization constraint.
The necessary condition for an optimal solution of \eqref{eq:app_relaxed} then
reads
\begin{align}
\vec{\hat{r}}- \frac{ p_\rho \vec{s}}{\sqrt{1- \langle \vec{s}, \vec{s} \rangle }} + \lambda \vec{1} = \vec{0}.
\label{eq:app_lagrange}
\end{align}
Taking the inner product of this vector-identity with the ``all-ones'' vector $\vec{1}$ 
results in
\begin{equation}
0 
= \langle \vec{1},\vec{0} \rangle 
= \langle \vec{1}, \vec{\hat{r}} \rangle - \frac{p_\rho \langle \vec{1},\vec{s} \rangle}{\sqrt{1-\langle s,s \rangle}} + \lambda \langle \vec{1},\vec{1} \rangle  
= 1 - \frac{p_\rho}{\sqrt{1- \langle \vec{s},\vec{s} \rangle}} + d \lambda,
\label{eq:app_aux1}
\end{equation}
where we have used $\langle \vec{1},\vec{\hat{r}} \rangle = \sum_{i=1}^n \hat{r}_i = \Tr \left( \hat{\rho} \right) = 1$ and the normalization constraint, which likewise assures $\langle \vec{1},\vec{s} \rangle = 1$. 
This equation allows us to replace $\sqrt{1 - \langle \vec{s},\vec{s} \rangle}$ by 
$\frac{p_\rho}{1+d \lambda}$ and reinserting this into \eqref{eq:app_lagrange} results in the equivalent vector equation
\begin{align}
\vec{\hat{r}} - \left( 1 + d \lambda \right) \vec{s} + \lambda \vec{1} = \vec{0}.  
\end{align}
This can be readily inverted to yield
\begin{align}
\vec{s} = \frac{1}{1 + d \lambda} \left( \vec{\hat{r}} + \lambda \vec{1} \right).
\label{eq:app_lagrange_opt}
\end{align}
In order to determine the value of $\lambda$, we revisit \eqref{eq:app_aux1} 
which in combination with \eqref{eq:app_lagrange_opt} demands
\begin{align}
p_\rho^2 
&= (1 + d \lambda)^2 \left( 1 - \langle \vec{s},\vec{s} \rangle \right)
= (1 + d \lambda)^2 - \langle \vec{\hat{r}},\vec{\hat{r}} \rangle
- 2 \lambda \langle \vec{1},\vec{\hat{r}} \rangle - \lambda^2 \langle \vec{1},\vec{1} \rangle , \\
&= d (d-1) \lambda^2 + 2 (d-1) \lambda + 1 - \Tr \left( \hat{\rho}^2 \right),
\end{align}
where we have once more used $\langle \vec{1},\vec{\hat{r}} \rangle = 1$ as well as
$\langle \vec{\hat{r}}, \vec{\hat{r}} \rangle = \sum_{i=1}^n \hat{r}_i^2 = \Tr \left( \hat{\rho}^2 \right)$. 
This results in the quadratic equation
\begin{align}
\lambda^2 + \frac{2}{d} \lambda - \frac{1}{d(d-1)} \left( p_\rho^2 + \Tr \left( \hat{\rho}^2 \right) - 1 \right),
\end{align}
for $\lambda$ whose two possible solutions correspond to
\begin{align}
\lambda_\pm
&= - \frac{1}{d} \left( 1 \mp \sqrt{\frac{d \left( p_\rho^2 + \Tr \left( \hat{\rho}^2 \right) \right) - 1}{d-1}} \right).
\end{align}
Note that the argument of the square-root is non-negative, because the purity $\Tr \left( \hat{\rho}^2 \right)$ of any quantum state is lower-bounded by $1/d$. 
Also, the second solution $\lambda_-$ is vacuous, since it leads to an immediate contradiction. 
Indeed, it follows by inspection that $\lambda_- < - 1/d$ holds. Together with \eqref{eq:app_aux1}  this implies the contradictory relation
\begin{align}
\sqrt{1 - \langle \vec{s},\vec{s} \rangle} = \frac{ p_\rho}{1 + d \lambda_-}
< 0,
\end{align}
because $p_\rho$ is positive by assumption. 

Consequently we are left with one meaningful value $\lambda_+$ for the Lagrangian multiplier and 
inserting it into \eqref{eq:app_lagrange_opt}
yields the unique critical solution
\begin{align}
\vec{s}^\sharp
&= \frac{1}{d} \vec{1} + \sqrt{\frac{d-1}{d \left( p_\rho^2 + \Tr \left( \hat{\rho}^2 \right) \right) - 1}} \left( \vec{\hat{r}} - \frac{1}{d} \vec{1} \right).
\end{align}
Recall that throughout this proof we are exploiting a one-to-one correspondence between vectors $\vec{s} = (s_1,\ldots,s_n)^T \in \mathbb{R}^d$ and hermitian $d \times d$-matrices $\sigma = \sum_{i=1}^n s_i | b_i \rangle\!\langle b_i |$ that commute with $\hat{\rho}$. 
Consequently, the critical vector $\vec{s}^\sharp$ corresponds to the critical matrix
presented in \eqref{eq:optimal_solution}.

Plugging the critical point $\vec{s}^\sharp$ into the objective function $f(\vec{s})$ furthermore yields the corresponding critical function value
\begin{align}
f \left( \vec{s}^\sharp \right)
&= \langle \vec{\hat{r}}, \vec{s}^\sharp \rangle + p_\rho \sqrt{1 - \langle \vec{s}^\sharp, \vec{s}^\sharp \rangle}
= \frac{\langle \vec{\hat{r}}, \vec{\hat{r}} \rangle + \lambda_+ \langle \vec{1},\vec{\hat{r}} \rangle }{1 + d \lambda_+} 
+ \frac{p_\rho^2}{1+ d \lambda_+}
= \frac{d \left( p_\rho^2 + \Tr \left( \hat{\rho}^2 \right) \right) -1 + 1 + d \lambda_+}{d(1 + d \lambda_+ )}, \nonumber \\
&= \frac{1}{d} \left( 1 + \frac{ d \left( p_\rho^2 + \Tr \left( \hat{\rho}^2 \right) \right)-1}{1 + d \lambda_+} \right)
= \frac{1}{d} \left( 1 + \sqrt{d-1}\sqrt{d \left( p_\rho^2 + \Tr \left( \hat{\rho}^2 \right) \right) -1} \right), \nonumber 
\end{align}
where we have once more replaced $\sqrt{1 - \langle \vec{s}_+^\sharp,\vec{s}_+^\sharp \rangle}$ by $ \frac{p_\rho}{  (1 + d \lambda_+)}$ 
and combined that with the fact that $(1 + d\lambda_+) =  \sqrt{\frac{d \left( p_\rho^2 + \Tr \left( \hat{\rho}^2 \right) \right)-1}{d-1}}$ holds. 
 
With such a unique critical point $\vec{s}^\sharp$ at hand, we are now ready to show that it strictly obeys the inequality constraint $\langle \vec{s}^\sharp, \vec{s}^\sharp \rangle$ we have ignored so far. 
By employing the same equalities we have used in the previous paragraph, we can readily establish such a claim:
\begin{align}
\langle \vec{s}^\sharp, \vec{s}^\sharp \rangle
&= 1 - (1 - \langle \vec{s}^\sharp, \vec{s}^\sharp )
= 1 - \frac{p_\rho^2}{(1+ d \lambda_+)^2} 
< 1. \label{eq:app_strict_feasibility}
\end{align}
The strict inequality on the right follows from the fact that $p_\rho > 0$ holds by assumption.
This indeed establishes, that $\vec{s}^\sharp$ is also a critical point of the optimization problem \eqref{eq:app_relaxed}. Since this optimization corresponds to maximizing a concave function over a convex set, the unique critical point $\vec{s}^\sharp$ must correspond to the unique maximum of \eqref{eq:app_relaxed}.  
\end{proof}

\subsection{Detailed proofs of \autoref{cor:bagan} and \autoref{cor:majorization} }

We conclude the proof section with providing detailed proofs of the remaining statements, namely 
that \autoref{thm:bound} reproduces the main result in \cite{Bagan2006Optimal} for the particular case of a single qubit, i.e. $d=2$ (\autoref{cor:bagan})
and that the bounds presented in \autoref{thm:bound} are strictly better than the ones outlined in \autoref{thm:boundfuchs} (\autoref{cor:majorization}).

\begin{proof}[Proof of \autoref{cor:bagan}]

We start this section by pointing out that in the particular case of dimension $d=2$, the two relaxations we have employed
in the previous subsection are not relaxations at all. Indeed, for dimension two, fidelity and super-fidelity coincide, and moreover the sets $\left\{ \left(y_1,y_2 \right)^T \in \mathbb{R}^2: \; y_1 + y_2 = 1,\; y_1,y_2 \geq 0 \right\}$ and $\left\{ \left( y_1,y_2 \right) \in \mathbb{R}^2: \; y_1 + y_2 = 1,\; y_1^2 + y_2^2 \leq 1 \right\}$  coincide (this one-to-one correspondence is illustrated in \autoref{fig:geometry} below).
These low-dimensional equivalences assure that all the relaxations employed in the derivation of \autoref{thm:bound} are actually tight.
Consequently, in this particular low-dimensional case, we solve the actual problem of interest.

For deducing the claimed statement from this fact,
we consider Equation (2.9) in \cite{Bagan2006Optimal}:
\begin{equation}\label{bagan_F}
F = \frac12\left(1+\sum_\chi \|\mathbf{V}_\chi\|_2\right).
\end{equation}
Here $\chi$ simply means the the data generated via the measurement.
The vector $\mathbf{V}_\chi$ is defined as follows:
\begin{equation}
\mathbf{V}_\chi = \mathbb E_\rho[\underline{\mathbf{r}}\Pr(\chi|\rho)],
\end{equation}
where $\underline{\mathbf{r}}$ is related to the usual Bloch vector $\vec{r} = (x,y,z)$ via
\begin{equation}
\underline{\mathbf{r}} = \left(\sqrt{1-\|\vec{\hat{r}}\|_2^2},\vec{r}\right).
\end{equation} 
We point out that this $F$ is not the same average fidelity we have considered but the following quantity (which corresponds to our Eq. \eqref{eq:jointaverage} above):
\begin{equation}
F = \max_\sigma \mathbb E_\rho \left[\mathbb E_{\chi|\rho}[F(\rho,\sigma(\chi)]\right].
\end{equation}
Note however that, by employing Bayes' rule, this is equal to 
\begin{equation}
F = \max_\sigma \mathbb E_\chi \left[\mathbb E_{\rho|\chi}[F(\rho,\sigma(\chi)]\right],
\end{equation}
and thus maximizing the posterior average fidelity is equivalent to maximizing the total average fidelity.  Our bound applies directly to the former but trivially extends to the latter.

Thus, to establish Corollary 1, we need to extract the posterior average fidelity from the expressions above.  First, using Bayes' rule, we calculate
\begin{equation}
\mathbf{V}_\chi = \Pr(\chi)\mathbb E_{\rho|\chi}[\underline{\mathbf{r}}].
\end{equation}
Using the fact that $\|\vec{r}\|^2_2 = 2\Tr(\rho^2)-1$ and 
\begin{equation}
\Tr(\mathbb E_{\rho|\chi}[\rho]^2) = \frac12\left(1+\left\|\mathbb E_{\rho|\chi}\left[\vec{r}\right]\right\|_2^2\right),
\end{equation}
we find
\begin{equation}
\|\mathbf{V}_\chi\|^2_2 = \Pr(\chi)^2\left(2\mathbb E_{\rho|\chi}\left[\sqrt{1-\Tr(\rho^2)}\right]^2+2\Tr\left(\mathbb E_{\rho|\chi}[\rho]^2\right)-1\right).
\end{equation}
Plugging this back into \eqref{bagan_F}, we have
\begin{align}
F &= \frac12\left(1+\sum_\chi \Pr(\chi)\sqrt{2\mathbb E_{\rho|\chi}\left[\sqrt{1-\Tr(\rho^2)}\right]^2+2\Tr\left(\mathbb E_{\rho|\chi}[\rho]^2\right)-1}\right),\\
&= \frac12\left(1+\mathbb E_\chi\left[\sqrt{2\mathbb E_{\rho|\chi}\left[\sqrt{1-\Tr(\rho^2)}\right]^2+2\Tr\left(\mathbb E_{\rho|\chi}[\rho]^2\right)-1}\right]\right),\\
&= \mathbb E_\chi\left[\frac12\left(1+\sqrt{2\mathbb E_{\rho|\chi}\left[\sqrt{1-\Tr(\rho^2)}\right]^2+2\Tr\left(\mathbb E_{\rho|\chi}[\rho]^2\right)-1}\right)\right].
\end{align}
Thus, implied by the results of \cite{Bagan2006Optimal}, the maximum posterior average fidelity (dropping the $\chi$ for parallelism) is
\begin{equation}
\max_\sigma \mathbb E_\rho [F(\rho,\sigma)] = \frac12\left(1+\sqrt{2\left(\mathbb E_{\rho}\left[\sqrt{1-\Tr(\rho^2)}\right]^2+\Tr\left(\mathbb E_{\rho}[\rho]^2\right)\right)-1}\right).
\end{equation}
This coincides with our main result \eqref{eq:main_bound} for dimension $d=2$. 
\end{proof}

\begin{proof}[Proof of \autoref{cor:majorization}]
For notational simplicity, let us introduce the short-hand notation
\begin{align}
s_\rho := \Tr \left( \mathbb E_\rho \left[ \rho^2 \right] \right) - \Tr \left( \mathbb E_\rho \left[ \rho \right]^2 \right),
\end{align}
such that the bound presented in \autoref{thm:boundfuchs}
simply reads $\max_{\sigma \in \mathcal{S}} \mathbb E_\rho \left[ F (\rho, \sigma ) \right]
\leq 1 - \frac{s_\rho}{4}$. 
Note furthermore that $0 \leq s_\rho \leq 1$ holds. As already mentioned, the lower bound follows from invoking Jensen's inequality, while the upper bound is a simple consequence of the fact that the purity of any state is at most one. 
A vanishing $s_\rho$ would correspond to a trivial Fuchs-van de Graaf bound of one which is the first case instance covered by \autoref{cor:majorization}.
Therefore we can from now on safely assume that $s_\rho >0$ holds. 
Under this assumption we prove the second claim by starting with the bound presented in \autoref{thm:bound}
and upper-bounding it via a chain of inequalities which will ultimately lead to the bound presented in \autoref{thm:boundfuchs}. Indeed, pick any dimension $d$ and an arbitrary distribution $\mathrm{d} \rho$ over states. Then Jensen's inequality assures 
\begin{align}
\mathbb E_\rho \left[ \sqrt{1 - \Tr \left( \rho^2 \right)} \right]^2 \leq 1 - \mathbb E_\rho \left[ \Tr \left( \rho^2 \right) \right],
\end{align}
and the right hand side of expression \eqref{eq:main_bound} in \autoref{thm:bound} can be upper-bounded by
\begin{align}
\frac{1}{d} + \frac{\sqrt{d-1}}{d} \sqrt{ d -1 - d s_\rho},
\end{align}
because the square root function is monotonically-increasing on the positive reals.
Adding and subtracting $s_\rho$ in the last square root and once more invoking monotonicity
allows us to continue via
\begin{equation}
 \frac{1}{d} + \frac{\sqrt{d-1}}{d} \sqrt{(d-1)(1-s_\rho) - s_\rho}
 < \frac{1}{d} + \frac{d-1}{d} \sqrt{1-s_\rho}, \label{eq:fuchs_aux1}
\end{equation}
where we have used $s_\rho >0$ in the last line to obtain strict inequality. 
Since the square root is a concave function, the inequality
$\sqrt{1- s_\rho} \leq 1 - \frac{1}{2} s_\rho$ is valid for any $s_\rho \leq1$
and consequently
\begin{align}
\frac{1}{d} + \frac{d-1}{d} \sqrt{1-s_\rho}
\leq
 1 - \frac{d-1}{2d} s_\rho,
\end{align}
is true. Finally, we use the simple fact that $\frac{d-1}{d} \geq \frac{1}{2}$ holds for any $d \geq 2$ to arrive at 
$1 - \frac{1}{4} s_\rho$ which is just the Fuchs-van de Graaf bound.
Since a strict inequality sign connects the expressions in  \eqref{eq:fuchs_aux1},
 the claimed strict majorization follows. 
\end{proof}

\section{Geometric interpretation of the relaxation leading to \autoref{ALLCAPS!!!} \label{sec:geometry}}

Recall that in order to arrive at \autoref{thm:bound}, we have replaced the feasible set
\begin{align}
\Delta^{d-1} = \left\{ \vec{s} \in \mathbb{R}^d: \; \langle \vec{1},\vec{s} \rangle =1,\; \vec{s} \geq \vec{0}\right\},
\label{eq:app_state_space}
\end{align}
of the optimization problem \eqref{eq:optimization1} by 
\begin{align}
\mathcal{E}_{\Delta^{d-1}} = \left\{ \vec{s} \in \mathbb{R}^d: \; \langle \vec{1},\vec{s} \rangle = 1, \; \langle \vec{s},\vec{s} \rangle \leq 1 \right\},
\label{eq:app_relaxed_state_space}
\end{align}
which is a convex outer approximation of  $\Delta^{d-1}$.
This follows from the basic fact that $x^2 \leq x$ holds for any $x \in [0,1]$. 
Since the vector components $s_i$ of any $\vec{s} \in \Delta^{d-1}$ have to obey $s_i \in [0,1]$, we can readily conclude
\begin{align}
\langle \vec{s},\vec{s} \rangle
= \sum_{i=1}^d s_i^2 \leq \sum_{i=1}^d s_i = 1.
\end{align}
Note that the converse is true if and only if $d=1,2$---a fact which we have exploited in proving \autoref{cor:bagan}.

Geometrically, the former set corresponds to the standard simplex in $\mathbb{R}^d$. 
In this section we prove that the latter one 
is in fact the minimum volume covering ellipsoid of the standard simplex which furthermore corresponds to a $(d-1)$-dimensional Euclidean ball. 
For dimensions two and three this situation is illustrated in \autoref{fig:geometry}. 

\begin{figure}[h]
\begin{tabular}{cc}
\begin{minipage}{5cm}
\includegraphics[width=\columnwidth]{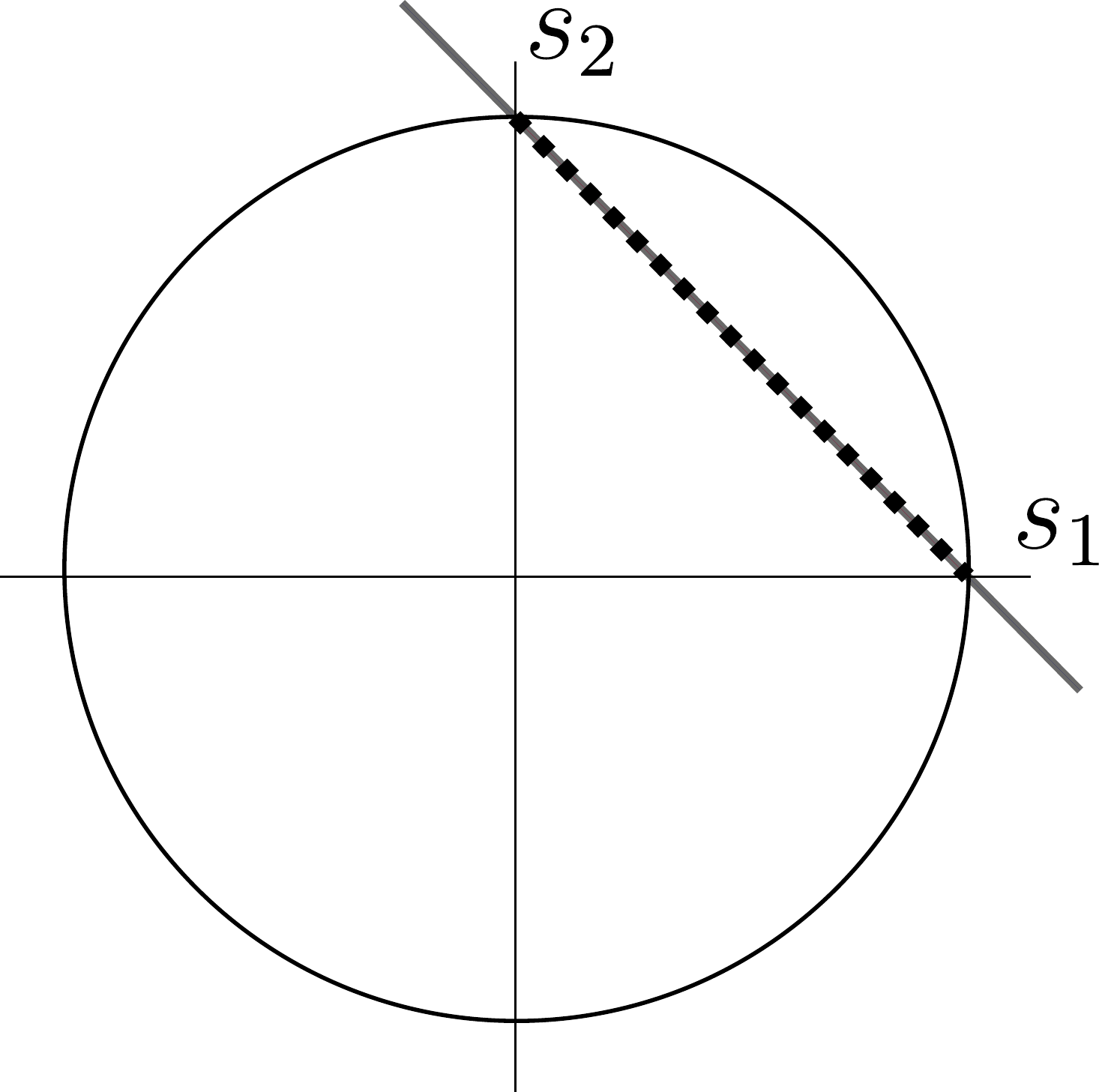}
\end{minipage}
\quad
\quad
\quad
&
\quad
\quad
\quad
\begin{minipage}{5cm}
\includegraphics[width=\columnwidth]{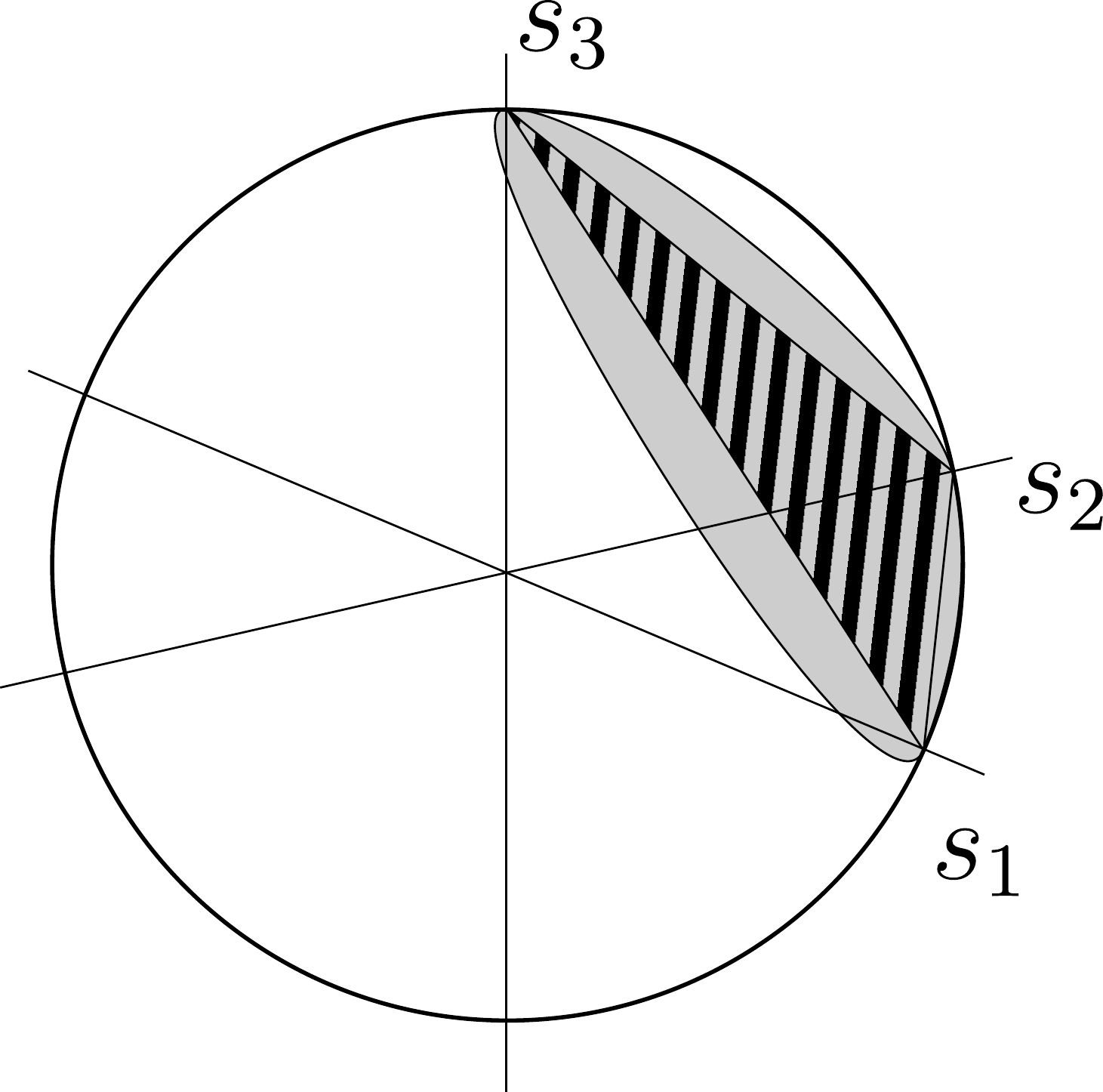}
\end{minipage}
\end{tabular}
\caption{\label{fig:geometry}
\emph{Geometric relation between the standard simplex $\Delta^{d-1}$ and its outer approximation $\mathcal{E}_{\Delta^{d-1}}$}:
Geometrically, the latter set corresponds to the minimum volume outer ellipsoid of the standard simplex.
The figure illustrates this relation for dimensions $d=2$ and $d=3$. Note that for $d=2$, the two sets coincide.}
\end{figure}

\begin{proposition}[Geometric nature of $\mathcal{E}_{\Delta^{d-1}}$] \label{prop:geometry1}
The convex outer-approximation $\mathcal{E}_{\Delta^{d-1}}$ of the $d$-simplex
corresponds to a $(d-1)$-dimensional Euclidean ball with radius $\sqrt{\frac{d-1}{d}}$
and center $\frac{1}{\sqrt{d}} \vec{1}$ which is contained in the $(d-1)$-dimensional hyperplane
$
 \mathcal{H}_{\vec{1},1}:= \left\{ \vec{s} \in \mathbb{R}^d: \; \langle \vec{1},\vec{s} \rangle =1 \right\}.
$
\end{proposition}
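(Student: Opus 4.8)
The plan is to recognize $\mathcal{E}_{\Delta^{d-1}}$ for what it geometrically is: the intersection of the affine hyperplane $\mathcal{H}_{\vec{1},1}$ with the closed Euclidean unit ball $B = \{\vec{s}\in\mathbb{R}^d : \langle \vec{s},\vec{s}\rangle \leq 1\}$. Since slicing a solid ball by a hyperplane always yields a lower-dimensional ball, the only real content is to pin down the correct center, radius, and dimension. First I would set up an orthogonal decomposition of $\mathbb{R}^d$ adapted to the single distinguished direction $\vec{1}$: write every $\vec{s}$ uniquely as $\vec{s} = t\,\vec{1} + \vec{v}$ with $\langle \vec{1},\vec{v}\rangle = 0$, so that $\vec{v}$ ranges over the $(d-1)$-dimensional subspace $\vec{1}^{\perp}$.

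Next I would impose the two defining constraints in these coordinates. The linear constraint $\langle \vec{1},\vec{s}\rangle = 1$ becomes $t\,\langle\vec{1},\vec{1}\rangle = t\,d = 1$, which freezes $t = 1/d$; this is precisely the statement that $\mathcal{E}_{\Delta^{d-1}} \subseteq \mathcal{H}_{\vec{1},1}$ and that the $\vec{1}$-component is pinned at the simplex centroid $\tfrac{1}{d}\vec{1}$. Substituting $t = 1/d$ into the quadratic constraint and using orthogonality, $\langle \vec{s},\vec{s}\rangle = t^2 d + \langle \vec{v},\vec{v}\rangle = \tfrac{1}{d} + \|\vec{v}\|^2$, so $\langle \vec{s},\vec{s}\rangle \leq 1$ is exactly equivalent to $\|\vec{v}\|^2 \leq \tfrac{d-1}{d}$.

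Combining the two, I would conclude that $\mathcal{E}_{\Delta^{d-1}} = \{\, \tfrac{1}{d}\vec{1} + \vec{v} : \vec{v}\in \vec{1}^{\perp},\ \|\vec{v}\| \leq \sqrt{(d-1)/d}\,\}$, which is manifestly a Euclidean ball of radius $\sqrt{(d-1)/d}$ centered at $\tfrac{1}{d}\vec{1}$ and lying inside $\mathcal{H}_{\vec{1},1}$. An equally short alternative route computes the center as the foot of the perpendicular dropped from the ball's center (the origin) onto $\mathcal{H}_{\vec{1},1}$, namely $\tfrac{1}{d}\vec{1}$, whose distance to the origin is $1/\sqrt{d}$, and then recovers the slice radius by Pythagoras as $\sqrt{1 - 1/d}$.

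The computation is entirely elementary, so there is no deep obstacle; the only points deserving care are dimensional and affine bookkeeping---confirming that $\vec{1}^{\perp}$ genuinely has dimension $d-1$ so that the object is a full $(d-1)$-ball rather than something degenerate, and tracking the affine offset $\tfrac{1}{d}\vec{1}$ so the center is reported correctly relative to the ambient $\mathbb{R}^d$. As a consistency check, for $d=2$ the radius is $\sqrt{1/2}$, matching the coincidence of the slice with the simplex that was already exploited in proving \autoref{cor:bagan}; for general $d$ the strict containment $\Delta^{d-1}\subsetneq \mathcal{E}_{\Delta^{d-1}}$ is the geometric shadow of the relaxation $s_i^2 \leq s_i$.
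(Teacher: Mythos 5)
Your proof is correct and takes essentially the same route as the paper's: both decompose $\mathbb{R}^d$ orthogonally along the distinguished direction $\vec{1}$, pin the $\vec{1}$-component to $\tfrac{1}{d}\vec{1}$ via the trace constraint, and recover the slice radius $\sqrt{(d-1)/d}$ by Pythagoras (the paper phrases this as ``generalized cylindrical coordinates,'' but it is the same computation, and your alternative foot-of-perpendicular argument is literally the paper's). Note that your center $\tfrac{1}{d}\vec{1}$ agrees with the paper's own proof; the $\tfrac{1}{\sqrt{d}}\vec{1}$ appearing in the proposition statement is a typo.
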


\begin{proof}
By definition, the set $\mathcal{E}_{\Delta^{d-1}}$ corresponds to the intersection 
of the Euclidean unit ball $\mathcal{B}_1 (0) = \left\{ \vec{s} \in \mathbb{R}^d:\; \langle \vec{s}, \vec{s} \rangle \leq 1 \right\}$ and the hyperplane $\mathcal{H}_{\vec{1},1}$.
This assures $\mathcal{E}_{\Delta^{d-1}} \subseteq \mathcal{H}_{\vec{1},1}$ by construction.

One way to establish that $\mathcal{E}_{\Delta^{d-1}}$ is furthermore itself an Euclidean ball, is using ``generalized cylindrical coordinates'' for the Euclidean unit ball $\mathcal{B}^d(\vec{0},1)$: 
Such coordinates use the fact that $\mathcal{B}^d (\vec{0},1)$ is equivalent to the union of a family of $(d-1)$-dimensional unit balls. 
More concretely: let $\vec{z} \in \mathbb{R}^d$ be an arbitrary unit vector and let $\zeta \in \mathbb{R}$ denote a parameter.
For each value of this parameter, we define the hyperplane $\tilde{\mathcal{H}}_{\vec{z}, \zeta} = \left\{\vec{s} \in \mathbb{R}^d: \; \langle \vec{z},\vec{s} \rangle = \zeta \right\}$ which in particular contains the vector $\zeta \vec{z}$ by construction. 
Furthermore, let $\tilde{\mathcal{B}}^{d-1} (\vec{z},\zeta)\subset \tilde{\mathcal{H}}_{z,\zeta}$ be the $(d-1)$-dimensional Euclidean ball with radius $\sqrt{1-\zeta^2}$ and center $\zeta \vec{z}$ that is contained in the hyperplane $\tilde{\mathcal{H}}_{\vec{z},\zeta}$. 
Clearly each element in such a union of sets is contained in the $d$-ball, and letting $\zeta$ range from $-1$ to $1$ covers the entire $d$-ball. 
In order to see this, decompose any  $\vec{s} \in \mathcal{B}^d (\vec{0},1)$ as $\vec{s} = \langle \vec{s},\vec{z} \rangle \vec{z} + \vec{z}^\perp$ such that $\langle \vec{z}^\perp,\vec{z} \rangle = 0$ and set $\zeta = \langle \vec{s},\vec{z}\rangle$. Pythagoras' theorem then assures $\| \vec{z}^\perp \|_2 \leq \sqrt{1 - \zeta^2}$ and consequently $\vec{s} \in \tilde{\mathcal{B}}^{d-1}(\vec{z},\zeta)$.

The structure of the particular problem at hand suggests to fix $\vec{z} = \frac{1}{\sqrt{d}} \vec{1}$. Indeed, such a particular choice of $\vec{z}$ assures equality of the hyperplane $\mathcal{H}_{\vec{1},1}$ which contains $\mathcal{E}_{\Delta^{d-1}}$ and
the hyperplane $\tilde{\mathcal{H}}_{\frac{1}{\sqrt{d}}\vec{1},\frac{1}{\sqrt{d} }}$,
we have just introduced.
Consequently, the ``cylindrical representation'' of the Euclidean unit ball assures that the intersection $\mathcal{E}_{\Delta^{d-1}} = \mathcal{B}_1 (0) \cap \mathcal{H}_{\vec{1},1}$ 
corresponds to the $(d-1)$-ball $\tilde{\mathcal{B}}^{d-1} (\frac{1}{\sqrt{d}}\vec{1},\frac{1}{\sqrt{d}})$ associated with the hyperplane $\tilde{\mathcal{H}}_{\frac{1}{\sqrt{d}}\vec{1},\frac{1}{\sqrt{d} }}$ and a parameter value $\zeta = \frac{1}{\sqrt{d}}$. 
By definition, this ball has center $\frac{1}{d}\vec{1}$ and radius $\sqrt{1- \zeta^2} = \sqrt{\frac{d-1}{d}}$ which completes the proof. 
\end{proof}

The next statement establishes that our choice of replacing the original feasible set $\Delta^{d-1}$ 
in the proof of Theorem 3
by the larger convex set $\mathcal{E}_{\Delta^{d-1}}$ is in a precise sense the tightest possible elliptic relaxation of the original optimization problem.

\begin{proposition}
The set $\mathcal{E}_{\Delta^{d-1}}$ is the unique minimal volume covering ellipsoid of the standard simplex $\Delta^{d-1}$. 
\end{proposition}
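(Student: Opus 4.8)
The plan is to reduce the statement to the classical characterization of the minimal-volume enclosing ellipsoid due to John and L\"owner. Existence and \emph{uniqueness} of such an ellipsoid are guaranteed by that theory for any convex body, so the uniqueness clause comes for free and the only thing left to establish is that the particular set $\mathcal{E}_{\Delta^{d-1}}$ is the minimizer. Throughout I would work inside the affine hull $\mathcal{H}_{\vec{1},1}$ of the simplex, whose direction space is the $(d-1)$-dimensional subspace $\vec{1}^{\perp}=\{\vec{x}\in\mathbb{R}^d:\langle\vec{1},\vec{x}\rangle=0\}$; this is essential because $\Delta^{d-1}$ is genuinely $(d-1)$-dimensional, and both ``ellipsoid'' and ``volume'' must be understood relative to this hull.

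By the preceding Proposition, $\mathcal{E}_{\Delta^{d-1}}$ is already known to be a $(d-1)$-ball of radius $r=\sqrt{(d-1)/d}$ centered at the centroid $\vec{c}=\tfrac{1}{d}\vec{1}$, and it contains $\Delta^{d-1}$. John's optimality criterion states that, after translating $\vec{c}$ to the origin and rescaling the radius to one, this ball is the minimal-volume covering ellipsoid if and only if there exist contact points $\vec{u}_1,\dots,\vec{u}_m$ lying in $\partial\Delta^{d-1}$ and on the unit sphere of $\vec{1}^{\perp}$, together with weights $c_i>0$, such that $\sum_i c_i\vec{u}_i=\vec{0}$ and $\sum_i c_i\vec{u}_i\vec{u}_i^{T}$ equals the identity on $\vec{1}^{\perp}$. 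The natural candidates for the contact points are the $d$ vertices $\vec{e}_1,\dots,\vec{e}_d$ of the simplex: a direct computation gives $\langle\vec{1},\vec{e}_i-\vec{c}\rangle=0$ and $\|\vec{e}_i-\vec{c}\|_2^2=1-\tfrac1d=r^2$, so each vertex sits on the bounding sphere, and I would set $\vec{u}_i=(\vec{e}_i-\vec{c})/r$.

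It then remains to verify the two John conditions with the symmetric choice of equal weights. Centering is immediate from $\sum_i(\vec{e}_i-\vec{c})=\vec{1}-\vec{1}=\vec{0}$. For the isotropy condition one expands
\[
\sum_{i=1}^d (\vec{e}_i-\vec{c})(\vec{e}_i-\vec{c})^{T}=I_d-\tfrac1d\vec{1}\vec{1}^{T},
\]
which is precisely the orthogonal projector onto $\vec{1}^{\perp}$ and hence acts as the identity there. Dividing by $r^2=(d-1)/d$ shows that $\sum_i\vec{u}_i\vec{u}_i^{T}$ equals $\tfrac{d}{d-1}$ times the identity on $\vec{1}^{\perp}$, so the uniform weights $c_i=(d-1)/d>0$ satisfy $\sum_i c_i\vec{u}_i\vec{u}_i^{T}=I$ on $\vec{1}^{\perp}$. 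With both John conditions verified, $\mathcal{E}_{\Delta^{d-1}}$ is the minimal-volume covering ellipsoid, and uniqueness follows from the general uniqueness in John's theorem.

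The main obstacle I anticipate is conceptual rather than computational: correctly handling the lower-dimensionality of $\Delta^{d-1}$. A naive application of John's condition in $\mathbb{R}^d$ would demand $\sum_i c_i\vec{u}_i\vec{u}_i^{T}=I_d$, which is impossible since every $\vec{u}_i\in\vec{1}^{\perp}$ forces the sum to annihilate $\vec{1}$; the resolution is to require the identity only on the $(d-1)$-dimensional subspace $\vec{1}^{\perp}$. As an alternative route that sidesteps the contact-point bookkeeping, one could invoke uniqueness first: the minimizer must be invariant under the full symmetry group of the simplex, which acts on $\vec{1}^{\perp}$ as the irreducible standard representation of the symmetric group $S_d$; by Schur's lemma an invariant quadratic form is a scalar multiple of the identity, forcing the minimal ellipsoid to be a ball centered at the centroid, whose radius is then pinned down by the vertices as $r=\sqrt{(d-1)/d}$.
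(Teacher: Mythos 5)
Your proposal is correct and follows essentially the same route as the paper: both verify John's optimality criterion (Theorem 2.1 in \cite{henk}) using the $d$ simplex vertices as contact points, reduce the isotropy condition to the identity $\sum_i (\vec{e}_i - \tfrac{1}{d}\vec{1})(\vec{e}_i - \tfrac{1}{d}\vec{1})^T = \mathbbm{1} - \tfrac{1}{d}\vec{1}\vec{1}^T$ being the projector onto $\vec{1}^\perp$, and handle the lower-dimensionality by working inside the affine hull. The only cosmetic difference is that you normalize the contact points and absorb the factor $r^2 = \tfrac{d-1}{d}$ into the weights, whereas the paper keeps unit weights and appeals to scale invariance of the criterion.
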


The proof exploits the following standard result about L\"owner-John ellipsoids that is originally due to John. However, here we make use of a slightly more general version presented in \cite{henk}.

\begin{theorem}[Theorem 2.1 in \cite{henk}] \label{thm:john}
Let $K \subset \mathbb{R}^d$ be a convex body and let $K$ be contained in the Euclidean unit ball $\mathcal{B}^d(0)$. Then the following statements are equivalent:
\begin{enumerate}
\item $\mathcal{B}^d (0)$ is the unique minimum volume ellipsoid containing $K$.
\item There exist contact points $\vec{u}_1,\ldots,\vec{u}_m$ lying both in the boundary of $K$ and $\mathcal{B}^d(0)$, and positive numbers $\lambda_1,\ldots,\lambda_m$, $m \geq d$, such that
\begin{align}
\sum_{i=1}^m \lambda_i \vec{u}_i = \vec{0}
\quad \textrm{and} \quad
\sum_{i=1}^m \lambda_i | \vec{u}_i \rangle\!\langle \vec{u}_i | = \mathbbm{1}.
\end{align}
\end{enumerate}
\end{theorem}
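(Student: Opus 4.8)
The statement is John's classical characterisation of the minimum-volume enclosing ellipsoid in the general (not necessarily centrally symmetric) setting, and the plan is to prove the two implications separately. I parametrise an arbitrary ellipsoid containing $K$ as $E = \{ \vec{x} \in \mathbb{R}^d : \langle M(\vec{x} - \vec{c}), \vec{x} - \vec{c} \rangle \leq 1 \}$ with a positive-definite matrix $M \succ 0$ and centre $\vec{c}$, so that $\mathrm{vol}(E) = (\det M)^{-1/2} \, \mathrm{vol}(\mathcal{B}^d(0))$ and minimising the volume is the same as maximising $\det M$. On matrices I use the Frobenius inner product $\langle H, S \rangle = \mathrm{tr}(HS)$, so that $\langle H, |\vec{u}\rangle\!\langle \vec{u}| \rangle = \langle H \vec{u}, \vec{u} \rangle$ and $\langle \mathbbm{1}, |\vec{u}\rangle\!\langle \vec{u}| \rangle = |\vec{u}|^2 = 1$ for every contact point (which lies on $\partial \mathcal{B}^d(0)$). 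Tracing the second condition in (2) then yields the normalisation $\sum_{i=1}^m \lambda_i = d$, so dividing by $d$ recasts (2) as the single membership statement that $\left( \tfrac{1}{d} \mathbbm{1}, \vec{0} \right)$ lies in the convex hull $C$ of $\{ ( |\vec{u}\rangle\!\langle \vec{u}|, \vec{u} ) : \vec{u} \in \partial K \cap \partial \mathcal{B}^d(0) \}$ inside $\mathrm{Sym}(d) \times \mathbb{R}^d$.

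For the implication (2) $\Rightarrow$ (1), I take any competing ellipsoid $E$ containing $K$. It then contains every contact point $\vec{u}_i$, so $\langle M(\vec{u}_i - \vec{c}), \vec{u}_i - \vec{c} \rangle \leq 1$; multiplying by $\lambda_i$ and summing, the first-moment condition $\sum_i \lambda_i \vec{u}_i = \vec{0}$ kills the cross term and the second-moment condition $\sum_i \lambda_i |\vec{u}_i\rangle\!\langle \vec{u}_i| = \mathbbm{1}$ turns the quadratic term into $\mathrm{tr}(M)$, leaving
\begin{equation*}
\mathrm{tr}(M) + \left( \sum_{i=1}^m \lambda_i \right) \langle M \vec{c}, \vec{c} \rangle \leq \sum_{i=1}^m \lambda_i = d.
\end{equation*}
Since $M \succ 0$ the centre term is non-negative, whence $\mathrm{tr}(M) \leq d$, and the arithmetic--geometric mean inequality applied to the eigenvalues of $M$ gives $\det M \leq 1$, i.e. $\mathrm{vol}(E) \geq \mathrm{vol}(\mathcal{B}^d(0))$. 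Equality in AM--GM forces all eigenvalues equal, which together with $\mathrm{tr}(M) = d$ yields $M = \mathbbm{1}$, and then the displayed inequality forces $\langle M \vec{c}, \vec{c} \rangle = 0$, hence $\vec{c} = \vec{0}$ and $E = \mathcal{B}^d(0)$; this gives both minimality and uniqueness.

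For the harder implication (1) $\Rightarrow$ (2) I argue by contraposition. If the membership statement fails, then, since $C$ is compact and convex, the separating-hyperplane theorem supplies a symmetric matrix $H$ and a vector $\vec{b}$ with $\langle H, |\vec{u}\rangle\!\langle \vec{u}| \rangle + \langle \vec{b}, \vec{u} \rangle < \tfrac{1}{d} \mathrm{tr}(H)$ for every contact point $\vec{u}$. Subtracting the trace part and using $\langle \mathbbm{1}, |\vec{u}\rangle\!\langle \vec{u}| \rangle = 1$ rewrites this as $\langle \tilde{H}, |\vec{u}\rangle\!\langle \vec{u}| \rangle + \langle \vec{b}, \vec{u} \rangle < 0$ for the trace-free matrix $\tilde{H} = H - \tfrac{1}{d} \mathrm{tr}(H) \mathbbm{1}$, uniformly over the compact contact set; adding a small $\varepsilon \mathbbm{1}$ then produces a trace-increasing direction $H_\varepsilon = \tilde{H} + \varepsilon \mathbbm{1}$ that still obeys the same strict inequality. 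I then perturb the optimal ellipsoid along $M = \mathbbm{1} + t H_\varepsilon$ and $\vec{c} = t \vec{v}$ with $\vec{b} = -2 \vec{v}$: a first-order expansion of $\langle M(\vec{u} - \vec{c}), \vec{u} - \vec{c} \rangle$ about $|\vec{u}|^2 = 1$ shows that each contact constraint acquires strict slack, while $\det(\mathbbm{1} + t H_\varepsilon) = 1 + \varepsilon d \, t + O(t^2) > 1$ for small $t > 0$. The perturbed set is therefore an ellipsoid of strictly smaller volume still enclosing $K$, contradicting minimality; thus the membership statement, and hence condition (2), must hold.

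The main obstacle is making the perturbation step rigorous: the first-order analysis only certifies the enclosure constraint exactly at the contact set, whereas feasibility must hold for every point of $K$ and for a genuine (not merely infinitesimal) range of $t$. The clean route is a compactness argument. The contact set $\partial K \cap \partial \mathcal{B}^d(0)$ is closed; every point of $K$ bounded away from it satisfies the original constraint $\langle \vec{x}, \vec{x} \rangle \leq 1$ with uniform strict slack that survives a small perturbation, while the boundary region near the contact set is controlled by the uniform first-order gap established above. Combining these two regimes and invoking continuity of the constraint in $t$ shows that the perturbed ellipsoid contains all of $K$ for all sufficiently small $t > 0$, which is exactly what is needed to close the contradiction.
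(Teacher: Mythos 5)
This theorem is not proved in the paper at all: it is imported verbatim as Theorem 2.1 of Henk's survey \cite{henk} and used as a black box in the proof that $\mathcal{E}_{\Delta^{d-1}}$ is the minimal covering ellipsoid of the simplex, so there is no in-paper argument to compare yours against. Judged on its own terms, your proof is correct and is essentially the classical proof of John's theorem (and in substance the one in the cited survey). The direction (2) $\Rightarrow$ (1) via $\mathrm{tr}(M) + d\,\langle M\vec{c},\vec{c}\rangle \leq d$ followed by the AM--GM inequality is exactly right, and your equality analysis ($\det M = 1$ forcing $\mathrm{tr}(M)=d$, hence $M=\mathbbm{1}$, hence $\vec{c}=\vec{0}$) delivers minimality and uniqueness in one stroke. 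Your reformulation of (2) as membership of $\left(\tfrac{1}{d}\mathbbm{1},\vec{0}\right)$ in the convex hull of the lifted contact set, followed by strict separation and the perturbation $M_t = \mathbbm{1} + t H_\varepsilon$, $\vec{c}_t = t\vec{v}$, is the standard route for (1) $\Rightarrow$ (2), and you correctly identified the one genuinely delicate step---that first-order slack at the contact points does not by itself give containment of all of $K$---and resolved it with the right two-regime compactness argument (uniform slack $\sup_{\vec{x}} \langle \vec{x},\vec{x}\rangle < 1$ away from the contact set, uniform first-order negativity near it). Three small points you should make explicit: (i) passing from convex-hull membership back to condition (2) requires $m \geq d$, which follows because $\mathbbm{1}$ has rank $d$ while each $|\vec{u}_i\rangle\!\langle \vec{u}_i|$ has rank one, so at least $d$ contact points are needed in any such decomposition; (ii) the degenerate case of an empty contact set deserves a sentence---there $\sup_{\vec{x} \in K} \|\vec{x}\|_2 < 1$ by compactness and a shrunken ball already contradicts minimality, so your ``away'' regime covers all of $K$; (iii) you must fix $0 < \varepsilon < \delta$, where $-\delta < 0$ is the uniform separation gap over the compact contact set, and note $M_t \succ 0$ for small $t$ so the perturbed set genuinely is an ellipsoid. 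None of these affects the soundness of the argument.
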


\begin{proof}
In Proposition \ref{prop:geometry1} we have established that the set $\mathcal{E}_{\Delta^{d-1}}$ corresponds to a $(d-1)$-ball with radius $\sqrt{\frac{d-1}{d}}$ 
and center $\frac{1}{d}\vec{1}$ that (like the standard simplex) is contained in the hyperplane $\mathcal{H}_{\vec{1},1}$.
A quick calculation reveals that all vertices of the standard simplex $\Delta^{d-1}$---which are just the standard basis vectors $\vec{e}_1,\ldots,\vec{e}_d$---have Euclidean distance $\sqrt{\frac{d-1}{d}}$ to the ball's center. Consequently they are contained in the boundary of the ball $\mathcal{E}_{\Delta^{d-1}}$ and we have found sufficiently many contact points for applying \autoref{thm:john}.
Since volume is translationally invariant we can furthermore shift the coordinate's origin into the point $\frac{1}{d} \vec{1}$ (which is the center of the ball $\mathcal{E}_{\Delta^{d-1}}$). This has the advantage that the affine space $\mathcal{H}_{\vec{1},1}$ containing both $\Delta^{d-1}$ and $\mathcal{E}_{\Delta^{d-1}}$ turns into $\mathcal{H}_{\vec{1},0}$  which is a linear subspace. 
Note that with respect to the (translated) standard basis, the orthogonal projection onto this subspace is given by
\begin{align*}
P = \mathbbm{1} - \frac{1}{d} | \vec{1} \rangle\!\langle \vec{1} |. \label{eq:projector}
\end{align*}
With respect to this new coordinate system, the $d$ contact points (vertices of the simplex) amount to $\vec{\tilde{e}}_i = \vec{e}_i - \frac{1}{d} \vec{1}$. 
Choosing unit weights $\lambda_i = 1$ for all $m = d$ contact points $\vec{u}_i = \vec{\tilde{e}}_i$ and calculating
\begin{align}
\sum_{i=1}^m \lambda_i \vec{u}_i
&= \sum_{i=1}^n \vec{\tilde{e}}_i
= \sum_{i=1}^n \left( \vec{e}_i - \frac{1}{d} \vec{1} \right) = \vec{0}
\end{align}
reveals that the first condition for \autoref{thm:john} is fulfilled.
A similar calculation reveals
\begin{align*}
\sum_{i=1}^m \lambda_i |\vec{u}_i \rangle\!\langle \vec{u}_i |
= \mathbbm{1} - \frac{1}{d} | \vec{1} \rangle\!\langle \vec{1} |.
\end{align*}
This, however equals just the projector $P$ onto the subspace $\mathcal{H}_{\vec{1},0}$ which contains the entire $(d-1)$-dimensional problem of interest. 
Restricted to its range, a projector corresponds to the identity which establishes the second condition for \autoref{thm:john}.
Since this statement is invariant under re-scaling, we can also apply it here, where the radius of the $(d-1)$-dimensional surrounding Euclidean ball is not one but $\sqrt{\frac{d-1}{d}}$. 
\end{proof}

\section{Conclusion}

In this work we have derived upper bounds on the average fidelity of any estimator with no restrictions on the dimension or the distribution being averaged over.  Furthermore, we have shown a sharp distinction in the optimization problems of maximizing average fidelity between measures supported only on pure states and those with full support.  In the former case, we have provided the exact optimal estimator, while in both cases we argued based on numerical evidence that the mean estimator is a good proxy for the optimal solution.  

Interestingly, we found that the analytical bound \eqref{eq:main_bound} (which is based on super-fidelity \cite{Miszczak2009Sub}) is strictly tighter than a corresponding one obtained using the well known, and often used, Fuchs-van de Graaf inequalities \cite{Fuchs1999Cryptograhic}. 

These results have obvious applications to practical Bayesian quantum tomography \cite{BlumeKohout2010Optimal}, since the bound can be computed \emph{online}---that is, it is only a property of the current distribution under consideration.  But we also expect our bound to be of interest in other theoretical work on tomography, where a benchmark is needed to make statements about absolute average performance of some candidate protocol.  

\begin{acknowledgments}
We thank Robin Blume-Kohout, Steve Flammia and Patrick Hayden for fruitful discussions.
CF was supported by National Science Foundation grant number PHY-1212445, the Canadian Government through the NSERC PDF program, the IARPA MQCO program, the ARC via EQuS project number CE11001013, and by the US Army Research Office grant numbers W911NF-14-1-0098 and W911NF-14-1-0103.  RK was  supported by scholarship funds from the State Graduate Funding Program of Baden-W\"urttemberg,
 the Excellence Initiative of the German Federal and State Governments (Grant ZUK 43),
the ARO under contracts, W911NF-14-1-0098 and W911NF-14-1-0133 (Quantum Characterization, Verification, and Validation),
the Freiburg Research Innovation Fund, and the DFG (GRO 4334-1-1). 
\newline
Furthermore, the authors would like to thank the \emph{Mathematisches Forschungsinstitut Oberwolfach} and the organizers of the Oberwolfach workshop \emph{New Horizons in Statistical Decision Theory} (September 2014, Workshop ID: 1437)
where work on this project commenced.
\end{acknowledgments}


\begin{thebibliography}{99}


\bibitem{Nielsen2010Quantum}
M.~A. Nielsen and I.~L. Chuang, \emph{Quantum computation and quantum information}.  Cambridge University Press (2010).





\bibitem{Wootters1981Statistical}
W.~K. Wootters, \emph{Statistical distance and Hilbert space}, \href{http://dx.doi.org/10.1103/PhysRevD.23.357}{Physical Review D {\bf 23}, 357 (1981)}.

\bibitem{Jozsa1994Fidelity}
R. Jozsa, \emph{Fidelity for mixed quantum states}, \href{http://dx.doi.org/10.1080/09500349414552171}{Journal of Modern Optics {\bf 41}, 2315 (1994)}.

\bibitem{Emerson2002Fidelity}
Joseph Emerson, Yaakov S. Weinstein, Seth Lloyd and D. G. Cory, {\em Fidelity Decay as an Efficient Indicator of Quantum Chaos}, \href{http://dx.doi.org/10.1103/PhysRevLett.89.284102}{Physical Review Letters {\bf 89}, 284102 (2002)}.

\bibitem{Khaneja2005Optimal}
Navin Khaneja, Timo Reiss, Cindie Kehlet, Thomas Schulte-Herbruggen and Steffen J. Glaserb,  \emph{Optimal control of coupled spin dynamics: design of NMR pulse sequences by gradient ascent algorithms}, \href{http://dx.doi.org/10.1016/j.jmr.2004.11.004}{Journal of Magnetic Resonance 172, 296 (2005)}.

\bibitem{Bagan2006Optimal}
E. Bagan, M. A. Ballester, R. D. Gill, A. Monras and R. Munoz-Tapia, {\em Optimal full estimation of qubit mixed states}, \href{http://dx.doi.org/10.1103/PhysRevA.73.032301}{Physical Review A {\bf 73}, 032301 (2006)}.


\bibitem{Emerson2007Symmetrized}
J. Emerson, M. P. da Silva, O. Moussa, C. A. Ryan, M. Laforest, J. Baugh, D. G. Cory and R. Laflamme
\emph{Symmetrized characterization of noisy quantum processes}, \href{http://dx.doi.org/10.1126/science.1145699}{Science {\bf 317}, 1893 (2007)}.

\bibitem{Flammia2011Direct}
Steven T. Flammia and Yi-Kai Liu, \emph{Direct fidelity estimation from few Pauli measurements}, \href{http://dx.doi.org/10.1103/PhysRevLett.106.230501}{Physical Review Letters
{\bf 106}, 230501 (2011)}.

\bibitem{DaSilva2011Practical}
Marcus P. da Silva, Olivier Landon-Cardinal and David Poulin, \emph{Practical characterization of quantum devices without tomography}, \href{http://dx.doi.org/10.1103/PhysRevLett.107.210404}{Physical Review Letters 107, 210404 (2011)}.

\bibitem{Cook2014Single}
Robert L. Cook, Carlos A. Riofrio and Ivan H. Deutsch, {\em Single-shot quantum state estimation via a continuous measurement in the strong backaction regime}, \href{http://dx.doi.org/10.1103/PhysRevA.90.032113}{Physical Review A {\bf 90}, 032113 (2014)}.



\bibitem{Berger1985Statistical}
J. Berger, {\em Statistical Decision Theory and Bayesian Analysis}. Springer (1985).


\bibitem{Hradil1997Quantum}
Z. Hradil, \emph{Quantum-state estimation}, \href{http://dx.doi.org/10.1103/PhysRevA.55.R1561}{Physical Review A {\bf 55}, R1561(R) (1997)}.

\bibitem{BlumeKohout2010Optimal}
Robin Blume-Kohout, {\em Optimal, reliable estimation of quantum states}, \href{http://dx.doi.org/10.1088/1367-2630/12/4/043034}{New Journal of Physics, {\bf 12}, 043034 (2010)}.

\bibitem{BlumeKohout2010Hedged}
Robin Blume-Kohout, \emph{Hedged Maximum Likelihood Quantum State Estimation}, \href{http://dx.doi.org/10.1103/PhysRevLett.105.200504}{Physical Review Letters {\b 105}, 200504 (2010)}.

\bibitem{Gross2010Quantum}
David Gross, Yi-Kai Liu, Steven T. Flammia, Stephen Becker and Jens Eisert, \emph{Quantum State Tomography via Compressed Sensing}, \href{http://dx.doi.org/10.1103/PhysRevLett.105.150401}{Physical Review Letters {\bf 105}, 150401 (2010)}.

\bibitem{kueng_low_2014}
Richard Kueng, Holger Rauhut and Ulrich Terstiege, \emph{Low rank matrix recovery from rank-one measurements}, \href{http://arxiv.org/abs/1410.6913}{arXiv:1410.6913 (2014)}

\bibitem{Boyd2004Convex}
S. Boyd and L. Vandenberghe, \emph{Convex Optimization}. Cambridge University Press (2004).



\bibitem{BlumeKohout2006Accurate}
Robin Blume-Kohout and Patrick Hayden, {\em Accurate quantum state estimation via ``Keeping the experimentalist honest''}, \href{http://arxiv.org/abs/quant-ph/0603116}{arXiv:quant-ph/0603116}.





\bibitem{Fuchs1999Cryptograhic}
Christopher A Fuchs and Jeroen Van De Graaf, \emph{Cryptographic distinguishability measures for quantum-mechanical states}, \href{http://dx.doi.org/10.1109/18.761271}{IEEE Transactions on Information Theory {\bf 45}, 1216 (1999)}.


\bibitem{Miszczak2009Sub}
J. A. Miszczak, Z. Puchała, P. Horodecki, A. Uhlmann, K. Zyczkowski, \emph{Sub- and super-fidelity as bounds for quantum fidelity}, \href{http://arxiv.org/abs/0805.2037}{Quantum Information \& Computation {\bf 9}, 103 (2009)}.

\bibitem{Horn1990Matrix}
Roger A. Horn and Charles R. Johnson, \emph{Matrix analysis (Second edition)}. Cambridge University Press (2013).





\bibitem{Chase2009Singleshot}
Bradley A. Chase and J. M. Geremia, {\em Single-shot parameter estimation via continuous quantum measurement}, \href{http://dx.doi.org/10.1103/physreva.79.022314}{Physical Review A {\bf 79}, 022314 (2009)}.

\bibitem{Granade2012Robust}
Christopher E Granade, Christopher Ferrie, Nathan Wiebe and D G Cory, {\em Robust online Hamiltonian learning}, \href{http://dx.doi.org/10.1088/1367-2630/14/10/103013}{New Journal of Physics {\bf 14}, 103013 (2012)}.

\bibitem{Wiebe2013Hamiltonian}
Nathan Wiebe, Christopher Granade, Christopher Ferrie and D. G. Cory, {\em Hamiltonian Learning and Certification Using Quantum Resources}, \href{http://dx.doi.org/10.1103/PhysRevLett.112.190501}{Physical Review Letters {\bf 112}, 190501 (2014)}.

\bibitem{Huszar2012Adaptive}
F. Huszar and N. M. T. Houlsby, {\em Adaptive Bayesian quantum tomography}, \href{http://dx.doi.org/10.1103/PhysRevA.85.052120}{Physical Review A {\bf 85}, 052120 (2012)}.

\bibitem{Ferrie2014High}
Christopher Ferrie, {\em High posterior density ellipsoids of quantum states}, \href{http://dx.doi.org/10.1088/1367-2630/16/2/023006}{New Journal of Physics {\bf 16}, 023006 (2014)}.

\bibitem{Ferrie2014Quantum}
Christopher Ferrie, {\em Quantum model averaging}, \href{http://dx.doi.org/10.1088/1367-2630/16/9/093035}{New Journal of Physics {\bf 16}, 093035 (2014)}.


\bibitem{qinfer}
Christopher Granade and Christopher Ferrie, \href{https://github.com/csferrie/python-qinfer}{\emph{{QInfer:} Library for Statistical
  Inference in Quantum Information}}, (2012).

\bibitem{Holevo1982Probabilistic}
A.~S. Holevo, {\em Probabilistic and Statistical Aspects of Quantum Theory}.  North-Holland Publishing Company (1982).

\bibitem{Zyckowski2011Generating}
Karol Zyczkowski, Karol A. Penson, Ion Nechita and Benoit Collins, \emph{Generating random density matrices}, \href{http://dx.doi.org/10.1063/1.3595693}{Journal of Mathematical Physics {\bf 52}, 062201 (2011)}.






\bibitem{henk}
Martin Henk, \emph{L\"owner-John ellipsoids}, \href{http://www.math.uiuc.edu/documenta/vol-ismp/24_henk-martin.html}{Documenta Mathematica, Extra Volume: Optimization Stories, 95 (2012)}





\end{thebibliography}
\end{document}